\begin{document}
\begin{frontmatter}
            \title{Compact Quantitative Theories\\
  of Convex Algebras} 		
  \author{Matteo Mio\thanksref{a}\thanksref{myemail}}	
   \address[a]{CNRS \& ENS de Lyon, France}  							
   \thanks[myemail]{Email: \href{mailto:matteo.mio@ens-lyon.fr} {\texttt{\normalshape
   matteo.mio@ens-lyon.fr}}. Research supported by the ANR project \href{https://anr.fr/Projet-ANR-20-CE48-0005}{\texttt{QuaReMe}} (ANR-20-CE48-0005). } 
\begin{abstract} 
  We introduce the concept of compact quantitative equational theory.
  A quantitative equational theory is defined to be compact if all its consequences are derivable by means of finite proofs.
 We prove that the theory of interpolative barycentric (also known as convex) quantitative algebras of Mardare et.~al.~is  compact. 
 This serves as a paradigmatic example, used to obtain other compact quantitative equational theories of convex algebras, each axiomatizing some distance on finitely supported probability distributions.  
\end{abstract}
\begin{keyword}
  Quantitative Algebra, probability distributions, convex algebras, finite proofs. 
\end{keyword}
\end{frontmatter}
\section{Introduction}\label{introduction:section}

At the core of universal algebra is a sound and complete deductive system, due to 
Birkhoff (see, for example,~\cite[\S 3.2]{DBLP:series/eatcs/Wechler92}),
which allows to derive judgments of the form $E\vdash s=t$, where $E$ is a set of equations, 
with the meaning `every algebra that satisfy all equations in $E$ also satisfies the equation $s=t$'. 
Since Birkhoff's proof system can be seen as a fragment of the deductive apparatus of classical first order logic,
every proof is a finite (both in width and depth) tree.
This is a key property, and it is fundamental when it comes to applications of universal algebra in computer science:%
~proofs, being finite objects, can be represented and manipulated by machines, they can be enumerated and there are effective procedures to check if a given derivation tree is a valid proof or not.


Quantitative algebra, recently introduced in \cite{DBLP:conf/lics/MardarePP16}, is an active area of investigation 
\cite{adamek22,adv23,DBLP:journals/lmcs/BacciBLM18,bmpp18,DBLP:journals/corr/abs-2302-01224,DBLP:journals/lmcs/BacciMPP24,Gavazzo18,DBLP:journals/pacmpl/GavazzoF23,DBLP:conf/lics/MardarePP17,Sarkis2024,MioEA24,MioEA22,DBLP:conf/lics/MioSV21,DBLP:conf/concur/MioV20,DBLP:conf/fscd/LagoHLP22,JurkaEA24,MiliusUrbat19,WildS20,WildS21,GoncharovEA23,DAngeloEA24,ForsterEA25}
aimed at extending
the methods of universal algebra to the study of so-called \emph{quantitative algebras}: algebraic 
structures $\mathbb{A}=(A,\{ op^{\mathbb{A}} \}_{op\in\Sigma})$ further endowed 
with an extended metric distance $d_A:A\times A\rightarrow [0,\infty]$ compatible with the interpretations  $op^{\mathbb{A}}$ of the operations $op\in\Sigma$.
The key novel concept is that of quantitative equation $s=_\epsilon t$ between terms, 
for $\epsilon \in [0,\infty)$,
which asserts that the distance between $s$ and $t$ is bounded above by $\epsilon$ ($d_A(s,t)\leq \epsilon$).

In~\cite{DBLP:conf/lics/MardarePP16}, a sound and complete proof system, analogous to that
of Birkhoff, has been investigated for deriving judgments\footnote{
This is a simplification sufficient for the purpose of this introduction.
More precisely, the judgments of the proof system are called \emph{quantitative inferences}~\cite{DBLP:conf/lics/MardarePP16} and are Horn implications involving quantitative equations.
Precise definitions are given in Section \ref{sec:background:qa}.}
of the form $E\vdash s=_\epsilon t$ 
that involve quantitative equations
\emph{in lieu}
of equations. However, unlike Birkhoff's system, the
deductive system for quantitative algebras is based on the infinitary first order logic
$L_{\omega_1,\omega}$ \cite{hodges93book}
which, crucially, allows countable conjuctions. 
As a consequence, a proof in quantitative algebra is, in general, not a finite object: it is a well founded tree with countable width. 
More specifically, the key source of infinite width in proofs comes from the presence of the following rule:
	\begin{center}
    $
	\text{
	\AxiomC{$E\vdash s=_{\epsilon_0} t$}
	\AxiomC{$E\vdash s=_{\epsilon_1} t$}
	\AxiomC{$\dots$}
	\AxiomC{$E\vdash s=_{\epsilon_n} t$}
	\AxiomC{$\dots$}
	\RightLabel{$\epsilon = \inf \{ \epsilon_i \}_{i\in \mathbb{N}}$}
	\QuinaryInfC{$E\vdash s=_\epsilon t$}
	\DisplayProof}
	$
  \end{center}
which states that if the distance between $s$ and $t$ is bounded above by $\epsilon_i$ (i.e.~$d_A(s,t)\leq \epsilon_i$), for all $i\in\mathbb{N}$, 
then necessarily the distance is also bounded by the infimum of all bounds ($d_A(s,t)\leq \epsilon$).

The non-finiteness of proofs in quantitative equational logic is an obstacle to mechanization, of course.
This is unavoidable: it can be shown
that the presence of an infinitary rule is required in any sound and complete proof system for quantitative algebra. 

\paragraph{Compact Quantitative Theories}
On the other hand, it is still possible 
for some well-behaved quantitative theories $E$, to satisfy the following property
(which is stated precisely in Section \ref{section:compact_theory}):
\begin{center}
If $E\vdash s=_\epsilon t$ is derivable, then it is derivable by a finite proof not using the infinitary rule.
\end{center}
We refer to such quantitative equational theories $E$ as \emph{compact}.

Perhaps surprisingly, several of the main examples of quantitative equational theories that have appeared in the literature turn out to be compact although,
to the best of our knowledge, this has never been explicitly observed. One such example, which we fully investigate and generalize in this work, is given by the quantitative theory of \emph{interpolative barycentric algebras} of \cite[\S 10]{DBLP:conf/lics/MardarePP16} 
which axiomatizes the Kantorovich (also known as 1-Wasserstein \cite[Ch 6]{villani2008}) metric on finitely supported probability distributions on a metric space.
Other examples include the theory of quantitative semilattices \cite[\S 10]{DBLP:conf/lics/MardarePP16}, axiomatizing the Hausdorff distance on finite subsets of a metric space, and the theory of quantitative convex semilattices \cite{DBLP:conf/concur/MioV20}, axiomatizing the composition of the Kantorovich and Hausdorff distances, on finitely generated convex sets of finitely supported probability distributions on a metric space.


Given the presence of such examples, important in the theory of programming languages and semantics,
and the relevance of finitary proofs in practical applications of  quantitative algebra, we argue that the systematic study of quantitative compact theories is an interesting endeavour.
Natural types of questions which we will  
explore in forthcoming work, include: 
what are conditions on theories that guarantee compactness? For example syntactical conditions on the shape of the quantitative equations,
or categorical conditions on the corresponding term monads. What is common between the examples mentioned above?
 In the other direction, assuming that a theory is compact, what can generally be said about its class of models, the corresponding term monad, \emph{et cetera}?

\paragraph{Contribution and Organization of this Work} The goal of this paper is to set the foundation for this line of research
by formally introducing the notion of compact quantitative theory and proving, in full details, that the theory of \emph{interpolative barycentric algebras} is compact. 
By generalizing this example, we also derive a family of compact quantitative theories, each axiomatising some distance (lifting) on probability distributions,
such as the  $k$-Wasserstein distance for  $k\in[1,+\infty]$, (also studied in \cite{DBLP:conf/lics/MardarePP16} for $k<\infty$) and examples involving log-probabilities.

Rather than working with the original theory of quantitative algebra of \cite{DBLP:conf/lics/MardarePP16}, we formulate our results in the recently
introduced generalization of \cite{MioEA24} (see \S 9.1 of \cite{MioEA24} for a detailed comparison with \cite{DBLP:conf/lics/MardarePP16}). One axis of generalisation consists in allowing quantitative algebras $\mathbb{A}=(A,\{ op^{\mathbb{A}} \}_{op\in\Sigma}, d_A)$
whose distances $d_A$ are not required to be a metric and can be arbitrary fuzzy relations. This, beside allowing non-metric distances to be modelled (as in \cite{MioEA22}),
has the advantage of decoupling the technicalities related to compactness from those related to metric reasoning. 
For example, by casting the quantitative theory of \emph{interpolative barycentric algebras} of \cite{DBLP:conf/lics/MardarePP16} in the context of \cite{MioEA24}, 
as we do in Section \ref{main:section:result}, we obtain a generalized compact axiomatization of the Kantorovich lifting of distances that are not necessarily metrics.

In Section \ref{technical:background:section} we introduce the necessary technical background.
In Section \ref{section:compact_theory} we introduce the notion of compact quantitative theory and provide some examples.
In Section \ref{main:section:result} we prove the compactness of a generalization of the quantitative theory of \emph{interpolative barycentric algebras} of \cite{DBLP:conf/lics/MardarePP16}.
In Section \ref{section:generalisation} we generalize this result and obtain other examples of compact quantitative theories of convex algebras.

\section{Technical Background}\label{technical:background:section}
\subsection{Convex Algebras, Probability Distributions and Couplings}
Convex algebras, also known under serveral other names including convex spaces \cite{fritz2015convexspacesidefinition} and barycentric algebras \cite{Stone1949} (see \cite{lmcs:4036,fritz2015convexspacesidefinition,Jacobs2010} for overviews), 
are algebraic structures over the uncountable signature of operations $\Sigma_\textnormal{CA}=\{ +_p \mid 0 < p < 1\}$ where, for each $p\in  (0,1)$, the operation $+_p$ is binary. Therefore, a $\Sigma_\textnormal{CA}$-algebra is a relational structure $\mathbb{A}=(A, \{ +^{\mathbb{A}}_p \}_{p\in (0,1)})$ with $+^{\mathbb{A}}_p: A\times A \rightarrow A$, for all $p$.
\begin{definition}\label{convex-algebras-definition}
A \emph{convex algebra} is a $\Sigma_\textnormal{CA}$-algebra $\mathbb{A}$ such that:
\begin{center}
\begin{tabular}{l l l}
	Idempotency: &  $a +^{\mathbb{A}}_p a = a $ & $\forall a\in A, \forall p \in (0,1),$\\
	Skew commutativity.: &  $a +^{\mathbb{A}}_p b = b +^{\mathbb{A}}_{1-p} a$ & $\forall a,b\in A, \forall p \in (0,1),$\\
	Skew associativity: &  $(a +^{\mathbb{A}}_p b) +^{\mathbb{A}}_q c = a +^{\mathbb{A}}_{pq} (b +^{\mathbb{A}}_{\frac{(1-p)q}{1-pq}} c)$ & $\forall a,b,c\in A, \forall p,q \in (0,1).$\\
\end{tabular}
\end{center}
\end{definition}
If $\mathbb{A}$ is clear from the context, we just write $a+_p a'$ in place of $a+^{\mathbb{A}}_p a'$.
Any convex subset of a real vector space, such as $[0,1]$, is a convex algebra with operations defined as $x+_p y = px + (1-p)y$.
But not all convex algebras are of this form. For example, any semilattice $(X,\vee)$ becomes a convex algebra by defining $x+_p y = x\vee y$, for all $p\in(0,1)$.

Alternatively, it is possible to present convex algebras using a signature of $n$-ary operations $\sum^{n}_{i=1} p_i x_i$, where $p_i\in [0,1]$ (thus including $0$ and $1$) and $\sum^{n}_{i=1} p_i =1$.
In this version, a convex algebra $\mathbb{A}$ is a set $A$ with interpretations of type $A^n\rightarrow A$, for all $n$-ary convex combination expressions, subject
to axioms similar to those given above for binary operations. The two approaches are equivalent \cite[Prop. 2.3]{lmcs:4036}, and one can always rewrite $n$-ary convex combinations to $\Sigma_\textnormal{CA}$-terms, recursively, as follows:
\begin{center}
$
\textnormal{(case $p_1=1$) }\sum^{n}_{i=1} p_i x_i  =  x_1 \qquad \qquad
\textnormal{(case $p_1=0$) }\sum^{n}_{i=1} p_i x_i  =  \sum^{n}_{i=2} p_i x_i 
$
\end{center}
\begin{center}
$
  \textnormal{(case $0<p_1<1$) }\sum^{n}_{i=1} p_i x_i =   x_1 +_{p_1} \big( \sum^{n}_{i=2} \frac{p_i}{1-p_1} x_i \big).
$
\end{center}
In what follows we will switch between binary and $n$-ary operations as most convenient.

Finitely supported distributions on a set play an important role in the theory of convex algebra.
\begin{definition}
Given a set $X$, a probability distribution on $X$ is a function $\mu\colon X\rightarrow[0,1]$
such that $\sum_{x\in X}\mu(x)=1$. The support of $\mu$ is the set $supp(\mu)=\{ x \mid \mu(x)>0\}$. 
We say that $\mu$ is \emph{finitely supported} if $supp(\mu)$ is finite. 
We denote with $D(X)$ the set of finitely supported probability distributions on $X$. For $x\in X$, we denote with $\delta_x$ the Dirac probability distribution defined as $\delta_x(y)=1$ if $y=x$ and $0$ otherwise.
\end{definition}

The set $D(X)$ with operations $+_p$ defined, for all $\mu,\nu\in D(X)$, as
$
(\mu +_p \nu)(x) = p\mu(x) + (1-p)\nu(x)
$, is a convex algebra which, with some abuse of notation, we also denote by $D(X)$. It is well known (see, for example, \cite{lmcs:4036}) that
$D(X)$ is the free convex algebra generated by $X$. This means that for any convex algebra $\mathbb{A}$, there is a one to one correspondence
between functions $f:X\rightarrow A$ and convex algebra homomorphisms $\hat{f}:D(X)\rightarrow \mathbb{A}$, such that $f = \hat{f}\circ \eta_X$, 
where $\eta_X:X\rightarrow D(X)$ is defined as $\eta_X(x)=\delta_x$. 

It also means that $D(X)$ can be seen as the quotient of $\textnormal{Terms}_{\Sigma_\textnormal{CA}}(X)$, the absolutely free algebra of $\Sigma_\textnormal{CA}$-terms over $X$, 
by the axioms of convex algebras (Definition \ref{convex-algebras-definition}). For this reason, given $s\in \textnormal{Terms}_{\Sigma_\textnormal{CA}}(X)$, 
we denote with $[s]\in D(X)$ the corresponding probability distribution. Formally, if $s$ is of the form (using $n$-ary notation) $s=\sum^n_{i=1}p_i x_i$ then
$[s]=\sum^{n}_{i=1}p_i \delta_{x_i}$.


We now turn attention to couplings of finitely supported probability distributions.
\begin{definition}
Let $X$ be a set and $\mu,\nu\in D(X)$. A \emph{coupling} $\gamma$ of $\mu$ and $\nu$ is a probability distribution $\gamma$ on $X\times X$ whose marginals
are $\mu$ and $\nu$: for all $x\in X$, $\sum_{x,y\in X} \gamma(x,y) = \mu(x)$, and for all $y\in X$, $\sum_{x,y\in X} \gamma(x,y) = \nu(y)$.
We denote with $\Gamma(\mu,\nu)\subseteq D(X\times X)$ the set of all couplings on $\mu$ and $\nu$.
\end{definition}

Note that $\Gamma(\mu,\nu)$ is always nonempty because the independent product $\mu\times\nu$ of $\mu$ and $\nu$, 
defined as $(\mu\times\nu)(\langle x,x'\rangle) = \mu(x)\cdot\nu(x')$, is a coupling. Furthermore, viewing $\Gamma(\mu,\nu)$ as a (closed and bounded) subset of 
$[0,1]^{n\times n}$ ($n=|supp(\mu) \cup supp(\nu)|$) we observe that $\Gamma(\mu,\nu)$ is compact, by the Heine-Borel theorem.
\begin{lemma}\label{Heine-Borel-lemmma}
	Given any set $X$ and $\mu,\nu\in D(X)$, the set of couplings $\Gamma(\mu,\nu)$ is nonempty and compact. 
\end{lemma}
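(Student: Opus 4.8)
The plan is to treat the two claims --- nonemptiness and compactness --- in turn, the first being immediate and the second reducing to the Heine-Borel theorem once the problem is cut down to finite dimensions.

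For nonemptiness, I would simply exhibit the independent product $\mu\times\nu$, defined by $(\mu\times\nu)(\langle x,x'\rangle)=\mu(x)\cdot\nu(x')$, as was already noted above: its support is contained in the finite set $supp(\mu)\times supp(\nu)$, so it lies in $D(X\times X)$, and summing out one coordinate recovers $\mu$ (resp.\ $\nu$) as a marginal since $\sum_{x'}\nu(x')=1$ (resp.\ $\sum_{x}\mu(x)=1$). Hence $\mu\times\nu\in\Gamma(\mu,\nu)$.

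For compactness, the first step is to localise the supports. Any $\gamma\in\Gamma(\mu,\nu)$ satisfies $supp(\gamma)\subseteq supp(\mu)\times supp(\nu)$: if $\gamma(x,y)>0$ then $\mu(x)=\sum_{y'}\gamma(x,y')\geq\gamma(x,y)>0$, so $x\in supp(\mu)$, and symmetrically $y\in supp(\nu)$. Setting $S=supp(\mu)\cup supp(\nu)$ and $n=|S|$, this identifies $\Gamma(\mu,\nu)$ with a subset of the cube $[0,1]^{S\times S}\subseteq\mathbb{R}^{n^2}$, and the subspace topology it carries as a set of finitely supported distributions agrees with the Euclidean one, since the evaluation maps $\gamma\mapsto\gamma(x,y)$ are precisely the coordinate projections. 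The second step is to check that this subset is closed and bounded: boundedness is clear from the inclusion in the unit cube, and closedness holds because $\Gamma(\mu,\nu)$ is the intersection of the finitely many closed half-spaces $\{\gamma(x,y)\geq 0\}$ with the finitely many affine hyperplanes $\{\sum_{y\in S}\gamma(x,y)=\mu(x)\}$ and $\{\sum_{x\in S}\gamma(x,y)=\nu(y)\}$, all preimages of closed sets under continuous (indeed linear) maps, hence closed; a finite intersection of closed sets is closed. The third step is to apply Heine-Borel: a closed bounded subset of $\mathbb{R}^{n^2}$ is compact.

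I do not expect a genuine obstacle. The only point requiring a moment's care is the passage to finite dimensions --- verifying the support inclusion and confirming that the natural topology on $\Gamma(\mu,\nu)$ matches the Euclidean subspace topology --- since Heine-Borel fails in infinite dimensions, so this reduction, rather than mere bookkeeping, is what makes the argument go through.
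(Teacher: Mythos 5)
Your proposal is correct and follows exactly the argument the paper gives (in the paragraph preceding the lemma): nonemptiness via the independent product $\mu\times\nu$, and compactness by identifying $\Gamma(\mu,\nu)$ with a closed and bounded subset of $[0,1]^{n\times n}$ for $n=|supp(\mu)\cup supp(\nu)|$ and invoking Heine--Borel. Your version merely spells out the support-inclusion and closedness details that the paper leaves implicit.
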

The following simple property of couplings is important: a convex combination of couplings is a coupling of the convex combinations.

\begin{lemma}\label{lemma:soundness}
Let $X$ be a set, and $\mu_1,\mu_2,\nu_1,\nu_2\in D(X)$. Let $\gamma_1\in \Gamma( \mu_1,\nu_1)$ and $\gamma_2\in \Gamma( \mu_2,\nu_2)$. 
Then, for all $p\in(0,1)$, it holds that $\gamma_1 +_p \gamma_2 \in\Gamma(\mu_1 +_p \mu_2 , \nu_1 +_p\nu_2 )$.
\end{lemma}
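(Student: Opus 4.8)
The plan is to verify the two defining conditions of a coupling for $\gamma_1 +_p \gamma_2$ by a direct computation, relying only on the fact that all the distributions involved are finitely supported, so that every sum in sight is finite and can be rearranged freely.

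First I would note that $\gamma_1 +_p \gamma_2$ is indeed an element of $D(X\times X)$: it is the convex combination of the two finitely supported distributions $\gamma_1$ and $\gamma_2$ computed in the convex algebra $D(X\times X)$, hence again a probability distribution on $X\times X$, and its support is contained in $supp(\gamma_1)\cup supp(\gamma_2)$, which is finite. Thus it suffices to check that its marginals are $\mu_1 +_p \mu_2$ and $\nu_1 +_p \nu_2$.

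For the first marginal, fix $x\in X$ and compute, using that only finitely many terms are nonzero so that the sum splits,
\[
\sum_{y\in X}(\gamma_1 +_p \gamma_2)(x,y)
 = \sum_{y\in X}\bigl(p\,\gamma_1(x,y) + (1-p)\,\gamma_2(x,y)\bigr)
 = p\sum_{y\in X}\gamma_1(x,y) + (1-p)\sum_{y\in X}\gamma_2(x,y).
\]
Since $\gamma_i\in\Gamma(\mu_i,\nu_i)$, the $i$-th sum on the right equals $\mu_i(x)$, so the whole expression equals $p\,\mu_1(x) + (1-p)\,\mu_2(x) = (\mu_1 +_p \mu_2)(x)$. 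The computation for the second marginal is identical, summing over the first coordinate instead, and yields $(\nu_1 +_p \nu_2)(y)$ for every $y\in X$. Hence $\gamma_1 +_p \gamma_2 \in \Gamma(\mu_1 +_p \mu_2,\nu_1 +_p\nu_2)$.

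There is no genuine obstacle: the statement is just the bilinearity of the marginal maps, restricted to finitely supported distributions, and the only point needing (a trivial amount of) care is observing that the supports are finite so that the rearrangement of sums above is unproblematic.
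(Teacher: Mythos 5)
Your proof is correct and is the routine marginal computation the paper has in mind (the paper states this lemma without proof, calling it a ``simple property of couplings''). Nothing further is needed.
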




\subsection{Quantitative Algebra}\label{sec:background:qa}
Quantitative algebra, as originally introduced in \cite{DBLP:conf/lics/MardarePP16}, deals with algebraic structures 
$\mathbb{A}=(A, \{op^{\mathbb{A}}\}_{op\in\Sigma})$, for some signature of function symbols $\Sigma$, further
endowed with an \emph{extended metric} $d_{A}\colon A\times A \rightarrow \mathbb{R}_{\geq 0}\cup \{\infty\}$ which makes all operations
1-Lipschitz:
$
d_{A}\big(  op^{\mathbb{A}}(a_1,\dots, a_n), op^{\mathbb{A}}(a'_1,\dots, a'_n) \big) \leq \max_{i=1\dots n} d_{A}(a_i,a'_i).
$

Here we instead follow a generalisation of the entire apparatus, recently proposed in \cite{MioEA24}. Quantitative algebras, in the sense of \cite{MioEA24}, are algebraic structures 
$\mathbb{A}=(A, \{op^{\mathbb{A}}\}_{op\in\Sigma})$, for some some signature of functions symbols $\Sigma$, further
endowed with an arbitrary\footnote{
  The choice of valuing the distance in $[0,1]$ is purely motivated by simplicity reasons. 
  All the results of \cite{MioEA24} hold when $[0,1]$ is replaced by any complete lattice, 
  such as $ \mathbb{R}_{\geq 0}\cup \{\infty\}$ or any quantale. See \cite{Sarkis2024}.
  } fuzzy relation $d_{A}\colon A\times A \rightarrow [0,1]$ not subject to any additional constraints such as being a metric or
making all operations 1-Lipschitz. Such constraints can be expressed by quantitative equations (cf.~Example \ref{basic-example-qeq}) in the same way that commutativity of a binary operation
can be expressed by equations in universal algebra. We refer to \cite{MioEA24} for a detailed exposition and, in particular, to  \cite[\S 9.1]{MioEA24} for a comparison with  \cite{DBLP:conf/lics/MardarePP16}.  Here we restrict attention only to the key definitions needed, later in Section \ref{section:compact_theory}, to formalize the concept of \emph{compact quantitative theory}.

\begin{definition}
Given a set $A$, a function of type $d:A\times A\rightarrow[0,1]$ is called a \emph{fuzzy relation} on $A$. We also refer to the pair $(A,d)$
as a fuzzy relation. Given two fuzzy relations $(A,d_A)$ and $(B,d_B)$, a function $f:A\rightarrow B$ is \emph{1-Lipschitz} (or also \emph{nonexpansive})
if $d_B\big( f(a), f(a') \big) \leq d_A(a,a')$, for all $a,a'\in A$.
\end{definition}
\begin{definition}\label{discrete_fuzzy_rel}
Given a set $A$ we denote with $d^{A}_1$, or just $d_1$ if $A$ is clear from the context, the \emph{discrete fuzzy relation} $d_{1}^A: A \times A \rightarrow [0,1]$
 defined as $d_{1}^A(a,a')=1$, for all $a,a'\in A$.
\end{definition}
Note that, for all fuzzy relations $(B,d_B)$, any function $f:A\rightarrow B$ is 1-Lipschitz as a map $f:(A,d^1_A)\rightarrow (B,d_B)$.
Also note that fuzzy relations are not required to be metrics nor pseudometrics.
For example, the discrete fuzzy relation is neither a metric nor a pseudometric because $d(a,a)\neq 0$.

Fuzzy relations on finite sets can be represented by finite matrices. For example:
\begin{center}
$
\textnormal{the fuzzy relation }
\Big( \ A = \{a_1, a_2\} \ , \  d_A =\begin{cases}
  (a_1,a_1)\mapsto 0.5\\
  (a_1,a_2)\mapsto 1\\
  (a_2,a_1)\mapsto 0.3\\
  (a_2,a_2)\mapsto 0\\
\end{cases}
\Big) \textnormal{ , is represented by: }
\tiny{
\begin{blockarray}{c c c}
& a_1 & a_2 \\
\begin{block}{c[cc]}
  a_1 &  0.5 & 1   \\
  a_2 & 0.3 & 0  \\
\end{block}
\end{blockarray}
}. 
$
\end{center}

\begin{definition}[Quantitative Algebra]
Let $\Sigma$ be a possibly infinite signature of function symbols $op\in\Sigma$, each having finite arity $ar(op)\in\mathbb{N}$. A \emph{quantitative algebra} is
a structure $\mathbb{A}=(A, \{ op^{\mathbb{A}}\}_{op\in \Sigma}, d_A)$, where:
\begin{enumerate}
	\item $(A, \{ op^{\mathbb{A}}\}_{op\in \Sigma})$ is an algebra, in the usual sense of universal algebra:  
	a set $A$ with interpretations of the operations: $ op^{\mathbb{A}}: A^{ar(op)}\rightarrow A$.
	\item $(A,d_A)$ is a fuzzy relation. 
\end{enumerate}
Given quantitative algebras $\mathbb{A}$ and $\mathbb{B}$, a homomorphism of quantitative algebras is a map $f:A\rightarrow B$ which is both 1-Lipschitz as $f:(A,d_A)\rightarrow (B,d_B)$ and
a homomorphism of the underlying algebras. 
\end{definition}
Whereas equations in ordinary universal algebra must be satisfied by all interpretations of the variables, in quantitative equations
the set of variables, say $B$, is specified in a context, namely a fuzzy relation $d_B\colon B\times B\rightarrow [0,1]$, and crucially
the quantitative equation must only be satisfied by
1-Lipschitz interpretations. This is made formal by the following definition.
\begin{definition}[Quantitative Equation]\label{def:quantitative:equation}
A \emph{quantitative equation} is an expression of one of the two following forms:
\begin{center}
$
\forall (B,d_B).  s =_\epsilon t \qquad \qquad \qquad \forall (B,d_B).  s = t 
$
\end{center}
where $(B,d_B)$ is a fuzzy relation on a (possibly infinite) set $B$, $s,t\in\textnormal{Terms}_\Sigma(B)$ are $\Sigma$-terms built from the set of generators $B$ 
and $\epsilon$ is a real number in $[0,1]$. 
When $(B,d_B)$ is finite, we often write the quantitative equation using matrix notation as, for example:
\begin{center}
$
\forall (
\tiny{
\begin{blockarray}{c c c}
& b_1 & b_2 \\
\begin{block}{c[cc]}
  b_1 &  1 & \epsilon   \\
  b_2 & 1 & 1  \\
\end{block}
\end{blockarray}
}
). \ s =_{\epsilon} t\ .
$
\end{center}
\end{definition}

\begin{definition}[Satisfiability Relation]
Let $\mathbb{A}$ be a quantitative algebra. We define the \emph{satisfiability relation} as follows:
\begin{center}
\begin{tabular}{l l c l}
$\mathbb{A}\models \forall (B,d_B).  s =_\epsilon t \qquad$ & $ \textnormal{iff} $ & $\qquad d_A\big( \iota(s) , \iota(t) \big) \leq \epsilon $ & $ \textnormal{ for all 1-Lipschitz $\iota:(B,d_B)\rightarrow(A,d_A)$}$\\
$\mathbb{A}\models \forall (B,d_B).  s = t \qquad$ & $ \textnormal{iff} $ & $\qquad  \iota(s) = \iota(t) $ & $ \textnormal{ for all 1-Lipschitz $\iota:(B,d_B)\rightarrow(A,d_A)$}$\\
\end{tabular}
\end{center}
where $\iota(s),\iota(t)\in A$ denote the intepretations of the terms $s,t$, using the extension $\iota\colon\textnormal{Terms}_{\Sigma}(B)\rightarrow \mathbb{A}$ of $\iota$ to terms defined, as usual in universal algebra, by structural induction on terms.
\end{definition}

\begin{remark}
  Following the universal algebra textbook \cite{DBLP:series/eatcs/Wechler92} and \cite{MioEA24}, the ``$\forall$'' symbol in quantitative equations is adopted to remind the universal quantification involved in the satisfiability relation.
  In \cite{Sarkis2024} (see also \cite{FordEA21}) the notation $(B,d_B)\vdash s=_\epsilon t$ is used \emph{in lieu} of $\forall (B,d_B).s=_\epsilon t$. We reserve the usage of the ``$\vdash$'' symbol, later on, for
  the syntactical consequence relation of quantitative algebra.
  \end{remark}

\begin{remark}
  The reader familiar with the original  apparatus of quantitative algebra of \cite{DBLP:conf/lics/MardarePP16} will recognize that quantitative equations $\forall(A,d).s=_\epsilon t$, in the sense of 
  Definition \ref{def:quantitative:equation}, coincide with \emph{basic inferences} of \cite{DBLP:conf/lics/MardarePP16} of the form:
  $ \{ a=_{d(a,a')} a' \mid a,a'\in A\}\vdash s=_\epsilon t$. Also note that, unlike \cite{DBLP:conf/lics/MardarePP16}, we defined two types of quantitative equations: with  equality $(=)$ and with quantitative equality $(=_\epsilon)$.
  This is because the models are not necessarily metric spaces, and so equality $(=)$ cannot be equivalently be expressed as distance zero $(=_0)$. We refer to
  \cite[\S 9.1]{MioEA24} for a detailed discussion. 
  \end{remark}

  \begin{remark}
    The {1-Lipschitz} condition on interpretations 
    is on the whole (potentially infinite) set $B$ and not on the finite subset $B'\subseteq B$ of variables appearing in $s,t$.
    To appreciate this point, consider $(B,d_B)= ([0,1],d_{[0,1]})$ with the standard Euclidean metric. 
    All 1-Lipschitz maps $f\colon(B,d_B)\rightarrow (2,d^2_1)$, from $(B,d_B)$ to the two-element set with the discrete fuzzy relation, are constant.
    However, there are non-constant 1-Lipschitz maps $f\colon(\{0,1\},d_{B'})\rightarrow (2,d^2_1)$ where $d_{B'}$ is the restriction of $d_B$ to $\{0,1\}\subsetneq [0,1]$.
    \end{remark}

Note that if no interpretation $\iota\colon(B,d_B)\rightarrow (A,d_A)$ is 1-Lipschitz, the quantitative equation is trivially satisfied.
Also (see comment after Definition \ref{discrete_fuzzy_rel}) if $d_B$ is the discrete fuzzy relation on $B$, every interpretation $\iota:B\rightarrow A$ is 1-Lipschitz.
Hence the familiar type of equations, quantified over all possible interpretations, can be expressed by taking $d_B=d^B_1$ to be the discrete fuzzy relation.

\begin{example}
  The following quantitative equation expresses that $op\in\Sigma$ is commutative.
 \begin{center}
  $ \forall (
    \tiny{
    \begin{blockarray}{c c c}
    & b_1 & b_2  \\
    \begin{block}{c[cc]}
      b_1 &  1 & 1 \\
      b_2 &  1 & 1 \\
    \end{block}
    \end{blockarray}
    }
    ). \ op(b_1,b_2) = op(b_2,b_1) \ .$
  \end{center}
\end{example}

When $d_B$ is not discrete, the 1-Lipschitz restriction on the intepretations becomes meaningful. 
\begin{example}
  The following quantitative equation expresses that any two points  having distance smaller or equal than $0$ (and thus, $0$) must be equal:
  \begin{center}
   $ \forall (
     \tiny{
     \begin{blockarray}{c c c}
     & b_1 & b_2  \\
     \begin{block}{c[cc]}
       b_1 &  1 & 0 \\
       b_2 &  1 & 1 \\
     \end{block}
     \end{blockarray}
     }
     ). \ b_1 = b_2\ .$
   \end{center}
\end{example}
By collecting infinitely many quantitative equations, with varying values defining $d_B$, it is
possible to express common properties such as the symmetry of the distance (for all $\epsilon$, $d(x,y)\leq\epsilon \Rightarrow d(y,x)\leq \epsilon$),
or the 1-Lipschitz property of an operation $op\in \Sigma$ (for all $\epsilon$, $d(x,y)\leq \epsilon \Rightarrow d(op(x),op(y))\leq \epsilon$).

\begin{example}\label{basic-example-qeq1}
Let $E=\{\phi_\epsilon\}$ be the set of quantitative equations, indexed by $\epsilon\in[0,1]$, of the form:
\begin{center}
  $\phi_\epsilon=\qquad \forall (
    \tiny{
    \begin{blockarray}{c c c}
    & b_1 & b_2  \\
    \begin{block}{c[cc]}
      b_1 &  1 & \epsilon \\
      b_2 &  1 & 1 \\
    \end{block}
    \end{blockarray}
    }
  ). \ b_2 =_\epsilon b_1.$
\end{center}
A quantitative algebra $\mathbb{A}$ satisfies $E$ (i.e., $\forall \phi_\epsilon \in E,\mathbb{A}\models\phi_\epsilon$) if and only if $d_A$ is symmetric. 
\end{example}

\begin{example}\label{basic-example-qeq}
  Let $op\in\Sigma$ be unary and let $E=\{\phi_\epsilon\}$ be the set of quantitative equations, indexed by $\epsilon\in[0,1]$, of the form:
  \begin{center}
    $\phi_\epsilon=\qquad \forall (
      \tiny{
      \begin{blockarray}{c c c}
      & b_1 & b_2  \\
      \begin{block}{c[cc]}
        b_1 &  1 & \epsilon \\
        b_2 &  1 & 1 \\
      \end{block}
      \end{blockarray}
      }
    ). \ op(b_1) =_\epsilon op(b_2).
      $
  \end{center}
  A quantitative algebra $\mathbb{A}$ satisfies $E$ if and only if $op^{\mathbb{A}}: (A,d_A)\rightarrow (A,d_A)$ is 1-Lipschitz. 
  \end{example}

With these definitions in place, one defines as expected the \emph{consequence relation} between a set $E$ of quantitative equations and a quantitative equation $\psi$.

\begin{definition}[Consequence Relation]
Let $E$ be a set of quantitative equations and $\psi$ be a quantitative equation. We write $E\models \psi$, and say that \emph{$\psi$ is a consequence of $E$}, if:
\begin{center}
$
\forall \mathbb{A}, \ \textnormal{ if } \big( \bigwedge_{\phi \in E} \mathbb{A}\models \phi \big) \ \textnormal{ then } \  \mathbb{A}\models \psi.
$
\end{center}
\end{definition}
A key fact in quantitative algebra is that the consequence relation can be axiomatized: there exists a proof system for deriving judgments of the form $E\vdash \psi$
such that $E\vdash \phi$ is derivable if and only if $E\models \phi$ holds. This proof system, introduced in \cite[\S 4]{MioEA24}, is the quantitative algebra analogous of the well known
deductive system of Birkhoff for equations. It includes a number of \emph{finitary} axioms and rules, some for handling equality judgments (entirely analogous to those of Birkhoff's proof system) like:
\begin{center}
$
    \AxiomC{}
    \RightLabel{}
    \UnaryInfC{$E\vdash \forall (B,d_B). s=s $}
    \DisplayProof
 \qquad \qquad
	\AxiomC{$E\vdash \forall (B,d_B). s= t$}
	\RightLabel{}
	\UnaryInfC{$E\vdash \forall (B,d_B). t= s$}
	\DisplayProof
$
\end{center}
and some for handling quantitive equality judgments, like:

\begin{center}
$
    \AxiomC{}
    \RightLabel{$d_B(b,b')= \epsilon$}
    \UnaryInfC{$E\vdash \forall (B,d_B). b=_\epsilon b'$}
    \DisplayProof
 \qquad \qquad
	\AxiomC{$E\vdash \forall (B,d_B). s=_{\epsilon} t$}
	\RightLabel{$\delta \geq \epsilon$}
	\UnaryInfC{$E\vdash \forall (B,d_B). s=_\delta t$}
	\DisplayProof
$
\end{center}
The axiom on the left allows to derive distances (provided by the context $d_B$) between variables and the ``weakening'' rule on the right 
  states that if from $E$ the upper bound $\leq\epsilon$ on the distance between $s$ and $t$ is derivable, 
then also the weaker upper bound $\leq \delta$  is derivable. An important feature of the proof system is the substitution rule (see Rule 4.e in Definition 4.1 of \cite{MioEA24}):
\begin{center}
  $
    \AxiomC{$E\vdash \forall (B,d_B). s=_\star t$}
    \AxiomC{$\big\{ E\vdash \forall (C,d_C). \sigma(b)=_{d_B(b,b')} \sigma(b') \mid  b,b'\in B,\ d_B(b,b')<1 \big\}$ }
    \RightLabel{$\sigma:B\rightarrow \textnormal{Terms}_\Sigma(C)$}
    \BinaryInfC{$E\vdash \forall (C,d_C). \sigma(s) =_\star \sigma(t).$}
    \DisplayProof
$
\end{center}
where $s=_\star t$ is either $s=t$ or $s=_\epsilon t$ for some $\epsilon\in[0,1]$ and $\sigma(s),\sigma(t)$ denote the application
of the substitution $\sigma$ to the terms $s,t$, defined as usual. The set of premises on the right-side witnesses that the substitution $\sigma$ is 1-Lipschitz. For this reason,
we say that the substitution rule can only be applied with \emph{provably} 1-Lipschitz substitutions $\sigma$.
Note that any substitution $\sigma$ is provably 1-Lipschitz when $d_B=d^B_1$ is the discrete distance on $B$ (in this case the right premise is empty).
Also note that the substitution rule is finitary whenever $\{ b,b' \mid d_B(b,b')<1\}$ is finite, as it happens in the special case $d_B=d^B_1$.
We refer to \cite[\S 4]{MioEA24} for a detailed presentation of the other rules of the proof system.
Here we only focus on the fact that, unlike Birkhoff's proof system, the proof system of quantitative algebra
 includes one infinitary rule: 
	\begin{center}
    $
	\text{
	\AxiomC{$E\vdash \forall (B,d_B). s=_{\epsilon_0} t$}
	\AxiomC{$E\vdash \forall (B,d_B). s=_{\epsilon_1} t$}
	\AxiomC{$\dots$}
	\AxiomC{$E\vdash \forall (B,d_B). s=_{\epsilon_n} t$}
	\AxiomC{$\dots$}
	\RightLabel{$\epsilon = \inf \{ \epsilon_i \}_{i\in \mathbb{N}}$}
	\QuinaryInfC{$E\vdash \forall (B,d_B). s=_\epsilon t$}
	\DisplayProof}
	$
  \end{center}
This rule states that if the bound $\leq \epsilon_i$ (in $\forall (B,d_B). s=_{\epsilon_i} t$) is derivable form $E$, for each $\epsilon_i$, 
then also the ``limit'' $\leq\epsilon$ bound ($E\vdash \forall (B,d_B). s=_\epsilon t)$ is derivable.
This rule has appeared in the literature under several names: it is named \emph{order completeness rule} in \cite{MioEA24},  \emph{Archimedean rule} in \cite{DBLP:conf/lics/MardarePP16} and \emph{continuity rule} in \cite{DBLP:journals/lmcs/BacciMPP24}. 
As a consequence of the presence of this rule, proofs in the proof system of quantitative algebra are, in general, 
well-founded infinite trees because some nodes, corresponding
to the application of the infinitary rule, have countably infinite width. The presence of some kind of infinitary rule is necessary,
in the sense that there is no axiomatisation of the consequence relation ($E \models \phi$) given by
a set of finitary Horn implications
in first order logic. This follows from the fact that the category of pseudometric spaces and 1-Lipschitz maps (which is the category of models of a quantitative theory $E$, see Example \ref{example:noncompact:theory} below) is not locally finitely presentable (see \cite[Ex.~2.2]{AdamekEA15} and \cite[\S 5]{Adamek_Rosicky_1994}).

A main result of quantitative algebra is the existence of free algebras. We refer to \cite[\S 5]{MioEA24} for a detailed formulation.
Here we only state the fact that, given a quantitative equational theory $E$ and a fuzzy relation $(A,d_A)$,
the \emph{free $E$-quantitative algebra generated by $(A,d_A)$} is isomorphic to $\big( \textnormal{Terms}_\Sigma(A)/_\equiv , \{ op^F\}_{op\in\Sigma}, d_F \big)$, where:
\begin{enumerate}
  \item the carrier is the set of $\Sigma$-terms over $A$ modulo the equivalence relation $\equiv$ of ``provable equality'' defined as:
    $s\equiv t\Leftrightarrow E\vdash\forall (A,d_A).s=t$,
  \item the interpretation of the operations is defined as: $op^F([s_1]_{\equiv},\dots,[s_n]_{\equiv})=[op(s_1,\dots,s_n)]_{\equiv}$,
  \item the distance is defined by ``provable $\epsilon$-equality'': $d_F([s]_\equiv,[t]_\equiv)\leq \epsilon \Leftrightarrow E\vdash\forall (A,d_A).s=_\epsilon t$.
  \end{enumerate}
It can be shown that definitions (ii--iii) are valid for all choices of representatives $s\in[s]_\equiv$, $t\in[t]_\equiv$.

\section{Compact Quantitative Theories}\label{section:compact_theory}

We now introduce the concept of a compact quantitative equational theory. 
In what follows a possibly infinite signature $\Sigma$ of function symbols, each having finite arity, is fixed. 

\begin{definition}[Compact Quantitative Theory]\label{compact_theory}
Let $E$ be a set of quantitative equations. We say that $E$ is \emph{compact} if, for all quantitative equations $\phi$,
if $E\models\phi$ then there exists a finite proof (i.e., a proof tree never using infinitary rules) of $E\vdash \phi$.
\end{definition}


\begin{remark}
We use the term \emph{theory} as synonym of \emph{set} of quantitative equations. Often, the term \emph{theory} is instead used to indicate a
set of (quantitative) equations closed under the consequence relation. But, in such setting,
every consequence $\phi$ of $E$ trivially admits a finite proof, because an axiom of the proof system (see ``Init'', Def. 4.1 of \cite{MioEA24}) allows to derive $E\vdash \phi$ whenever $\phi\in E$.
Hence, the notion of compactness of Definition \ref{compact_theory} is meaningful for sets $E$ not closed under deducibility.
\end{remark}


The following sequence of propositions illustrate the notion of compact quantitative theory.

\begin{proposition}
Let $\Sigma=\emptyset$ and $E=\emptyset$. Then $E$ is compact.
\end{proposition}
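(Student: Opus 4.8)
The plan is to analyze the trivial case where both the signature and the set of quantitative equations are empty, and show that in this degenerate situation every consequence admits a finite proof. When $\Sigma = \emptyset$, the only $\Sigma$-terms over a set of generators $B$ are the generators themselves, i.e. $\textnormal{Terms}_\Sigma(B) = B$. Consequently every quantitative equation over the empty signature has one of the four shapes $\forall(B,d_B).\,b =_\epsilon b'$, $\forall(B,d_B).\,b = b'$, $\forall(B,d_B).\,b =_\epsilon b$, or $\forall(B,d_B).\,b = b$, for generators $b, b' \in B$.

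First I would observe that, since $E = \emptyset$, a consequence $E \models \phi$ means that $\phi$ holds in every quantitative algebra. I would then enumerate the cases according to the shape of $\phi$. If $\phi = \forall(B,d_B).\,b = b$ or $\phi = \forall(B,d_B).\,b =_\epsilon b$ with a variable repeated, reflexivity (the leftmost axiom displayed in the excerpt, $E \vdash \forall(B,d_B).\,s = s$) together with, in the second case, the weakening rule $\delta \geq \epsilon$ applied from $=_0$ — or more directly the fact that one can always instantiate a trivial derivation — yields a finite proof. Actually the cleanest route for $b =_\epsilon b$ is: derive $\forall(B,d_B).\,b = b$ by reflexivity, then apply whatever rule of the system (Def.~4.1 of \cite{MioEA24}) converts an equality $s=t$ into $s =_\epsilon t$; if no such explicit rule exists I would instead note $d_B(b,b) = d_B(b,b)$ and use the axiom $E \vdash \forall(B,d_B).\,b =_{d_B(b,b)} b$ followed by weakening to $\epsilon$, which is valid precisely because $\models \phi$ forces $d_B(b,b) \leq \epsilon$ for every model, in particular the free one whose distance on generators is $d_B$ itself.

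The genuinely interesting cases are $\phi = \forall(B,d_B).\,b =_\epsilon b'$ and $\phi = \forall(B,d_B).\,b = b'$ with $b \neq b'$. For the first, I would argue that $E \models \phi$ forces $d_B(b,b') \leq \epsilon$: take the free quantitative algebra generated by $(B, d_B)$ (whose existence and description are recalled at the end of Section~\ref{sec:background:qa}), in which the identity map $B \to B$ is a $1$-Lipschitz interpretation and the distance between the classes of $b$ and $b'$ is exactly $d_B(b,b')$; satisfaction then gives $d_B(b,b') \leq \epsilon$. Now the axiom $E \vdash \forall(B,d_B).\,b =_{d_B(b,b')} b'$ (the leftmost quantitative axiom displayed) followed by the weakening rule ($\epsilon \geq d_B(b,b')$) produces a finite proof of $E \vdash \phi$. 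For $\phi = \forall(B,d_B).\,b = b'$ with $b \neq b'$: if there exists a model with two distinct elements and a $1$-Lipschitz $\iota$ with $\iota(b) \neq \iota(b')$, then $\phi$ is not a consequence, so $E \models \phi$ can only hold when no such interpretation exists; one then checks that the consequence relation, restricted to equality judgments over the empty signature, coincides syntactically with what is derivable — essentially $\phi$ is a consequence iff $d_B(b,b') = 0$ forces identification, which is handled by the rule deriving $b = b'$ from $d_B(b,b') \leq 0$ (the second displayed example of Section~\ref{sec:background:qa}), again a finite proof.

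The main obstacle I anticipate is bookkeeping rather than depth: one must be careful that the proof system of \cite{MioEA24} really does contain the rules I am invoking (reflexivity, the quantitative axiom reading off $d_B$, weakening, and the $0$-distance-implies-equality principle) and that no infinitary rule is needed anywhere — which is immediate here since every proof I construct has height at most two. A secondary subtlety is the equality case $b = b'$: I must confirm that $E \models \forall(B,d_B).\,b=b'$ genuinely does imply derivability without the infinitary rule, which reduces to the completeness theorem for equality judgments together with the observation that over $\Sigma = \emptyset$ the relevant proofs never branch infinitely. Since completeness of the full (possibly infinitary) system is available, and since over the empty signature no term has positive depth, the finitary fragment already suffices, giving compactness.
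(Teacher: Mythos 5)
There is a genuine gap in your handling of the equality case $\forall(B,d_B).\,b=b'$ with $b\neq b'$. You suggest that $E\models\phi$ "can only hold when no such interpretation exists" and that the remaining situation "is handled by the rule deriving $b=b'$ from $d_B(b,b')\leq 0$". But the $0$-distance-implies-equality principle displayed in Section \ref{sec:background:qa} is an \emph{example of a quantitative equation} one could place in a theory $E$; it is not a rule of the proof system, and here $E=\emptyset$, so it is not available. The correct observation, which the paper makes, is that this case simply never arises: since $\Sigma=\emptyset$ and $E=\emptyset$, \emph{every} fuzzy relation is a model of $E$ --- in particular $(B,d_B)$ itself, with the identity interpretation (which is trivially $1$-Lipschitz). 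If $b\neq b'$ as elements of $B$, this model refutes $\phi$, so $E\models\phi$ forces $b$ and $b'$ to be the \emph{same} generator, and reflexivity alone finishes the job. Your argument never isolates this, and the mechanism you fall back on does not exist in the system.

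A secondary issue: for the quantitative case you route the semantic argument through the free quantitative algebra and assert that its distance on generators "is exactly $d_B(b,b')$". That equality is not free --- the free algebra's distance is defined by provable $\epsilon$-equality, so claiming it coincides with $d_B$ on generators requires showing no derivation yields a strictly smaller bound, which is close to circular at this point in the development. The same direct model $(B,d_B)$ with the identity interpretation gives you $d_B(b,b')\leq\epsilon$ immediately, after which your derivation (the axiom $E\vdash\forall(B,d_B).\,b=_{d_B(b,b')}b'$ followed by weakening) is exactly the paper's and is fine. With these two repairs your proof matches the paper's.
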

\begin{proof}
First, observe that since there are no operations ($\Sigma=\emptyset$) the models of $E=\emptyset$ are all fuzzy relations $(A,d_A)$.
Now let $\phi$ be a quantitative equation. We need to prove that if $\emptyset \models \phi$ then $\emptyset \vdash \phi$ has a finite derivation. 
The assumption $\emptyset \models \phi$ means that $(A,d_A)\models \phi$, for all fuzzy relations $(A,d_A)$. 

Let us first consider the case $\phi$ is of the form $\forall (B,d_B). b = b'$. 
We claim that the assumption implies that $b=b'$. Indeed if $b\neq b'$, the fuzzy relation $(B,d_B)$ itself 
and the identity interpretation (which is always 1-Lipschitz), witness that $(B,d_B)\not\models \phi$. 
Now that the claim has been established, we observe that $E \vdash \forall (B,d_B). b = b$ can be derived by one of the axioms of the proof system.

Let us now consider the case $\phi$ is of the form $\forall (B,d_B). b =_\epsilon b'$. 
We claim that the assumption implies that $\epsilon \geq d_B(b,b')$. 
Indeed if $\epsilon < d_B(b,b')$, the fuzzy relation $(B,d_B)$ itself 
and the identity interpretation (which is always 1-Lipschitz), witness that $(B,d_B)\not\models \phi$. 
Now that the claim has been established, we observe that $E\vdash\forall (B,d_B). b =_{d_B(b,b')} b'$ can be can be derived by one of the axioms,
and from it we can apply a weakening rule (because $\epsilon  \geq d_B(b,b')$) to derive $E\vdash \forall (B,d_B). b =_\epsilon b'$.
\end{proof}

As we already observed, the semantics of equations $s=t$, in standard universal algebra, can
be expressed by  quantitative equations of the form $\forall (B, d_1). s= t$, using the discrete fuzzy relation
on a set $B$ containing all variables appearing in $s$ and $t$. All proof rules of Birkhoff's system have corresponding finitary rules
in the proof system of quantitative algebra, when dealing with quantitative equations of the form $\forall (B, d_1). s= t$.
In particular, the familiar substitution rule of Birkhoff proof system (see \cite[Rule EL5]{DBLP:series/eatcs/Wechler92}) has a corresponding rule because any substitution $\sigma$ 
is, as already remarked, provably 1-Lipschitz when using the discrete distance $d_1$. This has the following consequence.

\begin{proposition}\label{example-UA-embedding}
Fix an arbitrary $\Sigma$ and let $E=\{\phi_i\}_{i\in I}$ be a collection of quantitative equations $\phi_i$ of the form
$
 \forall (B_i, d_{1}^{B_i}). s_i = t_i$, where $d_1^{B_i}$ is the discrete fuzzy relation on $B_i$.
Then $E$ is compact.
\end{proposition}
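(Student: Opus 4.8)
The plan is to reduce $E$ to an ordinary (Birkhoff) equational theory and transport finiteness of proofs along this reduction. Write $\mathcal{E}=\{s_i=t_i\}_{i\in I}$ for the family of ordinary $\Sigma$-equations obtained from $E$ by forgetting the discrete contexts. The first step is the observation that, since each $\phi_i$ carries the discrete fuzzy relation $d_1^{B_i}$, the $1$-Lipschitz restriction in the definition of $\mathbb{A}\models\phi_i$ is vacuous (every map out of a discrete fuzzy relation is $1$-Lipschitz). Hence a quantitative algebra $\mathbb{A}=(A,\{op^{\mathbb{A}}\}_{op\in\Sigma},d_A)$ is a model of $E$ if and only if its underlying $\Sigma$-algebra is a model of $\mathcal{E}$, with $d_A$ left entirely unconstrained.

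I would then split on the shape of a quantitative equation $\phi$ with $E\models\phi$. If $\phi=\forall(B,d_B).\,s=t$, I would first show $E\models\phi \iff \mathcal{E}\vdash_{\mathrm{Birkhoff}}s=t$: one direction is immediate because interpretations factor through the underlying algebra, and for the other I would take the free $\mathcal{E}$-algebra $F_B=\mathrm{Terms}_\Sigma(B)/_{\mathcal{E}}$, equip it with the fuzzy relation $d(x,y)=\inf\{\,d_B(b,b')\mid b,b'\in B,\ [b]=x,\ [b']=y\,\}$ (with $\inf\varnothing=1$), note that this makes the canonical map $(B,d_B)\to(F_B,d)$ $1$-Lipschitz and that $(F_B,d)$ is a model of $E$, and conclude $[s]=[t]$, i.e.\ $\mathcal{E}\vdash s=t$. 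Given a finite Birkhoff derivation of $s=t$ from $\mathcal{E}$, I would build a finite proof of $E\vdash\forall(B,d_B).\,s=t$ by induction on it: reflexivity, symmetry, transitivity and congruence have finitary quantitative counterparts that keep the context $(B,d_B)$ fixed, and each application of an axiom instance $\sigma(s_i)=\sigma(t_i)$ is produced by applying the quantitative substitution rule to the Init-conclusion $E\vdash\forall(B_i,d_1^{B_i}).\,s_i=t_i$ along $\sigma\colon B_i\to\mathrm{Terms}_\Sigma(B)$ — the crucial point being that the set of right premises of the substitution rule is \emph{empty} here, since the source context $d_1^{B_i}$ is discrete, so the step is genuinely finitary.

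For $\phi=\forall(B,d_B).\,s=_\epsilon t$ I would first record that $E\vdash\forall(B,d_B).\,s=_1 t$ always has a finite proof: apply the substitution rule to the axiom $\vdash\forall(\{x,y\},d_1^{\{x,y\}}).\,x=_1 y$ (available because $d_1^{\{x,y\}}(x,y)=1$) along $x\mapsto s$, $y\mapsto t$, again with empty right premises. So it remains to treat $\epsilon<1$. Using the model $(F_B,d)$ of the previous step as the extremal one, I would show $E\models\phi$ is equivalent to $\epsilon\ge d([s],[t])=\inf\{\,d_B(b,b')\mid \mathcal{E}\vdash s=b,\ \mathcal{E}\vdash t=b',\ b,b'\in B\,\}$. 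When this bound is witnessed by an actual pair $(b,b')$ of variables — which in particular covers every $\epsilon$ strictly above the infimum, and the case in which $\mathcal{E}$ does not identify two distinct variables (then such a pair, if any, is unique) — I would assemble a finite proof of $E\vdash\phi$ from $E\vdash\forall(B,d_B).\,s=b$ and $E\vdash\forall(B,d_B).\,b'=t$ (obtained as in the previous paragraph), the axiom $E\vdash\forall(B,d_B).\,b=_{d_B(b,b')}b'$, the (derived) rule transporting a quantitative equality along provable equalities at its endpoints, and finally the weakening rule to pass to $\epsilon$.

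The step I expect to require the most care is precisely this last one: certifying by a \emph{finite} proof the bound governing $E\models\forall(B,d_B).\,s=_\epsilon t$ when that bound is only an infimum and not attained — the borderline configurations being those where $\mathcal{E}$ collapses distinct variables and $d_B$ is a fuzzy relation whose values merely approach their infimum. Here one must use more of the structure of the calculus to dispense with the continuity rule; the rest of the argument is a routine transfer of ordinary equational reasoning into the quantitative calculus.
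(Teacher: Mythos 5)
Your treatment of the equality case is correct and essentially the paper's argument: the paper makes the relevant interpretations 1-Lipschitz by putting the constant-zero fuzzy relation on an arbitrary model of $\mathcal{E}$ where you use an infimum-pushforward distance on the free algebra, but both yield $\mathcal{E}\vdash_{\mathrm{Birkhoff}} s=t$, and your transfer of the finite Birkhoff derivation into the quantitative calculus (substitution steps with empty right premises because the source contexts $d_1^{B_i}$ are discrete) is exactly the intended mechanism.

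The case $\phi=\forall(B,d_B).\,s=_\epsilon t$ contains a genuine gap. Your model $(F_B,d)$ with $d(x,y)=\inf\{d_B(b,b')\mid [b]=x,[b']=y\}$ is indeed a model of $E$, but it is not extremal: since $E$ places no constraint whatsoever on the fuzzy relation of a model, one may equip $F_B$ (or any nonempty model of $\mathcal{E}$) with the discrete fuzzy relation $d_1$, for which every interpretation is 1-Lipschitz and every distance equals $1$. Hence $E\models\forall(B,d_B).\,s=_\epsilon t$ already forces $\epsilon\geq 1$, i.e.\ $\epsilon=1$, and $E\vdash\forall(B,d_B).\,s=_1 t$ is a finitely provable axiom. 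Your claimed equivalence ``$E\models\phi$ iff $\epsilon\geq d([s],[t])$'' is therefore false in the backward direction, and it sends you off to finitely derive bounds $\epsilon<1$ that are not consequences of $E$ at all; the ``difficult'' non-attained-infimum configuration you flag at the end is precisely such a non-consequence (it is the situation of Proposition \ref{example:noncompact:theory}, where such bounds genuinely \emph{are} consequences and compactness fails --- so no amount of care with the calculus would close that case). Replacing your extremal model by the discrete-distance one collapses the quantitative-equality case to a one-line argument, which is what the paper does.
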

\begin{proof}
Let $\mathcal{E}=\{ s_i = t_i\}_{i\in I}$ be the set of equations (in the sense of universal algebra) corresponding to $E$.
Note that the models of $E$ are quantitative algebras $\mathbb{A}=(A,\{op^{\mathbb{A}}\}_{op\in\Sigma},d_A)$ whose underlying
algebra $(A,\{op^{\mathbb{A}}\}_{op\in\Sigma})$ is a model of $\mathcal{E}$ and $d_A$ is an arbitrary fuzzy relation.
Now, consider an arbitrary quantitative equation $\psi$ and assume $E\models \psi$. We need to show that $E\vdash \psi$ has a finite derivation.

Consider first the case that $\psi$ is of the form $\forall (B,d_B). s=t$, for some $s,t\in\textnormal{Terms}_\Sigma(B)$.
The assumption $E\models \psi$ means that all models of $E$ satisfy $\forall (B,d_B). s=t$. 
In particular, for each model $(A,\{op^{\mathbb{A}}\}_{op\in\Sigma})$ of $\mathcal{E}$, the quantitative algebra 
$\mathbb{A}=(A,\{op^{\mathbb{A}}\}_{op\in\Sigma},d^A_0)$ is a model of $E$,  with $d^A_0$ the constant zero fuzzy relation ($d^A_{0}(a,a')=0$).
Hence $\mathbb{A}\models \forall (B,d_B). s=t$. 
Note that any function $f\colon B\rightarrow A$ is 1-Lipschitz as a map $f\colon (B,d_B)\rightarrow (A,d^A_0)$.
Therefore $\mathbb{A}\models \forall (B,d_B). s=t$ means that $\iota(s)=\iota(t)$ holds for all interpretations $\iota$. 
Hence $\iota(s)=\iota(t)$ holds, for any interpretation $\iota$, in any model of $\mathcal{E}$.
By completness 
of Birkhoff's proof system, this implies that $s=t$ is derivable from $\mathcal{E}$ by means of a finite derivation.
By ``copying'' this finite derivation, in the proof system of quantitative algebra, we obtain a corresponding finite proof of $E\vdash\forall (B,d_1^B). s=t$,
where $d_1^B$ is the discrete fuzzy relation on $B$.
Finally, we can derive $E\vdash\forall (B,d_B). s=t$ from $E\vdash \forall (B,d_1^B). s=t$ by applying a substitution rule of the proof system, with the identity substitution ($\sigma(b)=b$) 
of type $\sigma:(B,d_1^B)\rightarrow \textnormal{Terms}_{\Sigma}(B,d_B)$, which is trivially provably 1-Lipschitz.

 Now consider the case that $\psi$ is of the form $\forall (B,d_B). s=_\epsilon t$. 
 The assumption $E\models \psi$ implies that any model $\mathbb{A}$ of $E$ satisfies $\forall (B,d_B). s=_\epsilon t$. But since, among the models of $E$, 
 there are models with the discrete fuzzy relation $d^A_1$, which assigns distance $1$ to the intepretations of $s$ and $t$, we deduce that $\epsilon =1$.
The quantitative equation $E\vdash \forall (B,d_B). s=_1 t$ can be finitely proved by an axiom of the proof system (see axiom (h), in Definition 4.1 of \cite{MioEA24}).
\end{proof}

However, as soon as quantitative theories $E$ including some quantitative equations of the form $\forall (B,d_B). s=_{\epsilon} t$ are considered, it
is very easy to incur in examples that are not compact.

\begin{proposition}\label{example:noncompact:theory}
  Let $\Sigma=\emptyset$ and let $E$ be the theory of pseudometric spaces defined as the union of the following sets of quantitative equations, for all $\epsilon,\epsilon_1,\epsilon_2\in[0,1]$:
\begin{center}
  $
  \forall (
\tiny{
  \begin{blockarray}{c c c}
& b\\
\begin{block}{c[cc]}
  b & 1   \\
\end{block}
\end{blockarray}
}
). \ b =_0 b
\qquad \qquad 
\forall (
\tiny{
\begin{blockarray}{c c c}
& b_1 & b_2 \\
\begin{block}{c[cc]}
  b_1 &  1 & \epsilon   \\
  b_2 & 1 & 1  \\
\end{block}
\end{blockarray}
}
). \ b_2 =_{\epsilon} b_1
\qquad \qquad 
\forall (
\tiny{
\begin{blockarray}{c c c c}
& b_1 & b_2 & b_3\\
\begin{block}{c[ccc]}
  b_1 & 1 & \epsilon_1 & 1  \\
  b_2 & 1 & 1 & \epsilon_2 \\
  b_3 & 1 & 1 & 1\\
\end{block}
\end{blockarray}
}
). \ b_1 =_{\min\{1,\epsilon_1 + \epsilon_2\}} b_3\ .
$
\end{center}
Then the quantitative theory $E$ is not compact.
\end{proposition}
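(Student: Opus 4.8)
The plan is to exhibit a single quantitative equation $\psi$ that is a semantic consequence of $E$ (the theory of pseudometric spaces) but whose only proofs must invoke the infinitary rule. The natural candidate is an equation forcing distance $0$ from a context whose explicitly listed distances are all strictly positive, e.g.\ something like $\psi = \forall (B,d_B).\, b_1 =_0 b_2$ where $d_B$ is chosen so that $b_1$ and $b_2$ are connected only through a chain of points, each consecutive pair at distance bounded by quantities that can be made arbitrarily small but are individually always $>0$ in the listed matrix. Concretely, I would take $B=\{b_1,b_2\}$ together with a countable family of ``intermediate'' witnesses, or — more cleanly — use the transitivity/triangle axioms of $E$ together with a sequence of contexts: observe that from $E$ one can derive $\forall(B,d_B).\, b_1 =_{\epsilon} b_2$ for every $\epsilon$ in a sequence $\epsilon_n \downarrow 0$, but that no single finite proof yields the bound $0$.

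The argument has two halves. First, \emph{soundness of $\psi$}: I would check directly that every pseudometric space (every model of $E$) satisfies $\psi$. If $d_B$ is set up so that $d_B(b_1,b_2)=0$ is \emph{forced} by the triangle inequality applied to a chain whose links have listed distances summing to something that can be driven to $0$ — actually, the simplest route is: take $d_B$ with $d_B(b_1,b_2) = 0$ already, so that $E\models \forall(B,d_B).\,b_1=_0 b_2$ holds trivially for pseudometric models because any 1-Lipschitz $\iota$ gives $d_A(\iota(b_1),\iota(b_2)) \le d_B(b_1,b_2) = 0$. Wait — but then a one-step axiom proves it. So $d_B(b_1,b_2)$ must \emph{not} be literally $0$ in the matrix; instead the bound $0$ must be a genuine \emph{consequence} of the pseudometric axioms, reachable only as an infimum. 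The right construction is therefore a context on infinitely many generators $\{b_1,b_2\}\cup\{c_n\}_{n\ge 1}$ with $d_B(b_1,c_n)$ and $d_B(c_n,b_2)$ both equal to small positive values $2^{-n}$, so that triangle gives $d_A(\iota(b_1),\iota(b_2)) \le 2^{-n+1}$ for every $n$ in every pseudometric model, hence $d_A(\iota(b_1),\iota(b_2)) = 0$; thus $E \models \forall(B,d_B).\, b_1 =_0 b_2$.

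Second, \emph{no finite proof exists}. Here the key idea is a model-theoretic obstruction: I would build, for each $\epsilon>0$, a \emph{fuzzy relation} (not a pseudometric — just an arbitrary model of the empty signature, which is \emph{any} fuzzy relation, but we need it to remain a model of $E$, so it must be a pseudometric) in which $d_B$ restricted to the relevant finite subcontext is realised but the conclusion distance is $\epsilon$ rather than $0$. More precisely, since any finite proof of $E\vdash \psi$ mentions only finitely many of the generators $c_n$, say $c_{n_1},\dots,c_{n_k}$, and only finitely many instances of the axioms of $E$, one shows that the sub-theory ``visible'' to that proof is satisfied by a pseudometric where $\iota(b_1)$ and $\iota(b_2)$ are at positive distance $\min_i 2^{-n_i+1} > 0$. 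Formally I would invoke soundness of the finitary fragment: if $E\vdash\psi$ had a finite (infinitary-rule-free) derivation, then $\psi$ would be a consequence of the finite set $E_0\subseteq E$ of equation instances actually used, and one exhibits a model of $E_0$ violating $\psi$. The cleanest packaging uses the free-algebra description from the excerpt: in the free $E$-algebra on $(B,d_B)$, $d_F([b_1],[b_2]) \le \epsilon$ iff $E\vdash \forall(B,d_B).\,b_1=_\epsilon b_2$; a finitary proof of the bound $0$ would have to ``come from'' finitely much data, contradicting that each finite piece only forces a strictly positive bound.

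The main obstacle I anticipate is making precise the step ``a finite infinitary-rule-free proof only uses finitely much of $E$ and of the context, and such finite data cannot force distance $0$.'' This requires a careful compactness-style / locality argument about the proof system of \cite{MioEA24}: one must show that the rules other than the order-completeness rule are ``continuous'' in the sense that a derivable strict conclusion $s =_0 t$ without the infimum rule would already be derivable with some $s =_\epsilon t$, $\epsilon>0$, from a finite subset of premises — equivalently, that the set of pairs at distance $0$ in the free algebra is \emph{not} reached by the finitary fragment when the generating context has no zero entries. I would handle this by exhibiting explicit separating pseudometric models $\mathbb{A}_\epsilon$: for each $\epsilon>0$, let $\mathbb{A}_\epsilon$ have carrier $B$ (or a quotient thereof) with a genuine pseudometric agreeing with $d_B$ on all listed positive entries, all entries $\le 1$, and $d_{\mathbb{A}_\epsilon}(b_1,b_2)=\epsilon$ — then $\mathbb{A}_\epsilon \models E$ (it is a pseudometric space) but $\mathbb{A}_\epsilon \not\models \forall(B,d_B).\, b_1=_0 b_2$, via the identity interpretation; since any finitary proof would be sound, and sound for \emph{all} models of $E$ including all the $\mathbb{A}_\epsilon$, this shows that no $\epsilon>0$ bound — and in particular not the $0$ bound — is finitarily derivable except via the infimum of the $\epsilon_n$.
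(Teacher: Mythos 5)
Your first half (establishing $E\models\psi$ via a chain of intermediate points and the triangle inequality) is essentially the paper's argument, which uses the context $B=\{\underline{0},\underline{0'}\}\cup\{\underline{\epsilon}\mid 0<\epsilon\leq 1\}$ with $d_B(\underline{0},\underline{0'})=1$ and $d_B(\underline{0},\underline{\epsilon})=d_B(\underline{0'},\underline{\epsilon})=\epsilon$; your countable family $\{c_n\}$ would work just as well there. The gap is in the second half, and it is fatal as written: the separating structures $\mathbb{A}_\epsilon$ you describe cannot exist. You require $\mathbb{A}_\epsilon$ to be a pseudometric space (hence a model of all of $E$), to make the identity interpretation $1$-Lipschitz from $(B,d_B)$ (hence $d_{\mathbb{A}_\epsilon}(b_1,c_n)\leq 2^{-n}$ and $d_{\mathbb{A}_\epsilon}(c_n,b_2)\leq 2^{-n}$ for \emph{every} $n$), and to have $d_{\mathbb{A}_\epsilon}(b_1,b_2)=\epsilon>0$. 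But the triangle inequality in $\mathbb{A}_\epsilon$ then forces $d_{\mathbb{A}_\epsilon}(b_1,b_2)\leq 2^{-n+1}$ for all $n$, i.e.\ $=0$ --- this is exactly your own soundness argument turned against you. More generally, since $E\models\psi$, \emph{no} genuine model of $E$ can refute $\psi$, so non-derivability in the finitary fragment can never be witnessed by actual pseudometric models of $E$; your closing sentence (``sound for all models of $E$ including all the $\mathbb{A}_\epsilon$'') is self-contradictory. The free-algebra ``packaging'' is also circular, because $d_F$ is defined via derivability in the \emph{full} (infinitary) proof system, so $d_F([b_1],[b_2])=0$ and it says nothing about finite proofs.

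What is needed instead --- and what the paper does --- is a counter-model for the \emph{finitary fragment of the proof system} that is not a genuine quantitative algebra: a relational structure $(C,\{R_\epsilon\}_{\epsilon\in[0,1]})$ in which each judgment $=_\epsilon$ is interpreted by an abstract binary relation $R_\epsilon$, chosen so that all finitary rules and all axioms of $E$ are validated while the infinitary rule fails. Concretely the paper takes $C=B$, $R_0=\mathrm{Id}$, and for $\epsilon>0$ a reflexive symmetric $R_\epsilon$ containing $(\underline{0},\underline{0'})$; then $(\underline{0},\underline{0'})\in\bigcap_{\epsilon>0}R_\epsilon$ but $(\underline{0},\underline{0'})\notin R_0$, so every finitary derivation is blocked from concluding $\underline{0}=_0\underline{0'}$. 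Your alternative idea of restricting to the finite set $E_0\subseteq E$ of axiom instances used in a putative finite proof and refuting $E_0\models\psi$ with a fuzzy relation that is \emph{not} a pseudometric could in principle be made to work, but you never carry it out, and it is not the route you commit to; as stated, the proof does not go through.
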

\begin{proof}
It is easy to verify (see \cite[\S 2.3]{MioEA24}) that the models of $E$ are fuzzy relations $(A,d_A)$ such that $d_A$ is a pseudometric: symmetric ($d_A(a,a')=d_A(a',a)$),
self-distance zero ($d_A(a,a)=0$) and satisfying the triangular inequality ($d_A(a,c) \leq d_A(a,b)+ d_A(b,c)$). Consider the fuzzy relation $(B,d_B)$ where $B=\{\underline{0},\underline{0'}\}\cup \{ \underline{\epsilon}\mid 0< \epsilon \leq 1\}$ (the unit interval with ``two zeros'') and $d_B$ is a symmetric and self-distance zero fuzzy relation specified by:
$
d_B(\underline{0},\underline{0'})=1$, $d(\underline{0},\underline{\epsilon})=d(\underline{0'},\underline{\epsilon}) = \epsilon$ and $d_B(\underline{\epsilon},\underline{\delta})=|\epsilon -\delta|
$, for all $\epsilon,\delta \in (0,1]$:
\begin{center}
  \tiny{
\begin{tikzcd}
	\underline{0} \\
	& \underline{\epsilon} & \underline{\delta} & \dots & \underline{1} \\
	{\underline{0'}}
	\arrow["\epsilon"{description, pos=0.6}, no head, from=1-1, to=2-2]
	\arrow["\delta"{description, pos=0.8}, no head, from=1-1, to=2-3]
	\arrow["1"{description}, no head, from=1-1, to=2-5]
	\arrow["\dots"{description}, draw=none, from=2-2, to=2-3]
	\arrow["1"{description}, no head, from=3-1, to=1-1]
	\arrow["\epsilon"{description, pos=0.6}, no head, from=3-1, to=2-2]
	\arrow["\delta"{description, pos=0.8}, no head, from=3-1, to=2-3]
	\arrow["1"{description}, no head, from=3-1, to=2-5]
\end{tikzcd}
}
\end{center}
Hence $d_B$ does not satisfy the triangular inequality because $d_B(\underline{0},\underline{0'})=d_B(\underline{0'},\underline{0})=1$.
Let $\phi$ be the quantitative equation $\forall (B,d_B). \underline{0} =_0 \underline{0'}$. 
We show that $E\models \phi$ but all proofs of $E\vdash \phi$ are infinite (i.e., they involve the infinitary rule).

First, we prove that $E\models \phi$. Let $(A,d_A)$ be a model of $E$, i.e., a pseudometric space, and let $\iota\colon(B,d_B)\rightarrow(A,d_A)$ be a 1-Lipschitz interpretation.
Note that $d_A(\iota(\underline{0}), \iota(\underline{0'}))=0$, because for every $\epsilon>0$, 
$d_A(\iota(\underline{0}), \iota(\underline{0'}))\leq d_A(\iota(\underline{0}), \iota(\underline{\epsilon})) + d_A(\iota(\underline{\epsilon}), \iota(\underline{0'}))\leq \epsilon + \epsilon$, 
by triangular inequality of $d_A$ and the 1-Lipschitz property of $\iota$. As this holds for all 1-Lipschitz $\iota$, we conclude that $(A,d_A)\models \phi$. Hence $E\models \phi$.

Secondly, we show that $E\vdash \phi$ does not have finite proofs. Informally, each judgment  $\forall (B,d_B). \underline{0} =_{2\epsilon} \underline{0'}$ has a finite derivation from the judgments $\forall (B,d_B). \underline{0} =_\epsilon \underline{\epsilon}$ and  $\forall (B,d_B). \underline{\epsilon} =_\epsilon \underline{0'}$, which are
 finitely derivable using an axiom of the proof system, and by application of the ``triangular inequality'' quantitative equation from $E$. But collecting these judgments to obtain $\forall (B,d_B). \underline{0} =_0 \underline{0'}$ requires an application of  the infinitary rule. 
Formally, we need to exhibit a relational structure $(C,\{R_\epsilon\}_{\epsilon\in[0,1]})$,
where each $R_\epsilon\subseteq C\times C$ is a binary relation interpreting $=_\epsilon$, such that $C$ models all the deductive rules of the proof system  -- except the infinitary rule -- and all the quantitative equations in $E$ but not $\phi$.
Let $C=B$ and define $R_0=\textnormal{Id}_C$ (the identity relation) and $R_\epsilon$, for each $\epsilon >0$, as the symmetric reflexive relation satisfying: 
\begin{center}
$
(\underline{0}, \underline{0'})\in R_\epsilon,
\ \ \ 
(z,\underline{\delta})\in R_\epsilon \, (\textnormal{for $z\in\{\underline{0},\underline{0'}\}$ and  $0< \delta\leq \epsilon$}),
\ \ \ 
(\underline{\delta},\underline{\lambda})\in R_\epsilon  \, (\textnormal{for }0<\delta,\lambda\leq 1 \textnormal{ s.t. }|\delta -\lambda|\leq \epsilon).
$
\end{center}
In this model, $\underline{0}$ is arbitrarily close to $\underline{0'}$ (i.e., $(\underline{0}, \underline{0'})\in R_\epsilon$) but not at distance $0$ (i.e., $(\underline{0}, \underline{0'})\not\in R_0$).
\end{proof}

\section{The Theory of Interpolative Convex algebras is compact}\label{main:section:result}

The theory of \emph{Interpolative Barycentric (IB) quantitative algebras} was introduced in \cite{DBLP:conf/lics/MardarePP16} as a main example of quantitative theory.
In \cite{DBLP:conf/lics/MardarePP16} it was proved that the free IB quantitative algebra generated by a metric space $(X,d\colon X\times X\rightarrow [0,1])$ is isomorphic to the quantitative 
algebra $(D(X), \{ +_p\}_{p\in (0,1)}, K(d))$, where $K(d)\colon D(X)\times D(X)\rightarrow [0,1]$ is the Kantorovich (see Definition \ref{Katorovich-defintion} below) lifting of $d$. 
In this section we recast this result in our chosen apparatus of quantitative algebra of \cite{MioEA24}, introduced in Section \ref{sec:background:qa}, where distances are not necessarily metrics. 
To avoid confusion with \cite{DBLP:conf/lics/MardarePP16} we will refer to it as the quantitative theory of \emph{Interpolative convex algebras} ($\mathbb{ICA}$).

\begin{definition}[Theory $\mathbb{ICA}$]\label{definition_thery_ICA}
  Let $\Sigma_\textnormal{CA}$ be the signature of convex algebras. The quantitative theory of \emph{interpolative convex algebras}, denoted by $\mathbb{ICA}$, is defined as the union
  of the following quantitative equations, for all $p,q\in(0,1)$:
  \begin{center}
    \begin{tabular}{l l}
      Idempotency: &  $\forall (
        \tiny{
          \begin{blockarray}{c c c}
        & x\\
        \begin{block}{c[cc]}
          x & 1   \\
        \end{block}
        \end{blockarray}
        }
      ).\ x +_p x = x\qquad$
      \end{tabular}
      \begin{tabular}{l l}
      Skew comm.: &  $\forall (
        \tiny{
        \begin{blockarray}{c c c}
        & x & y \\
        \begin{block}{c[cc]}
          x &  1 & 1   \\
          y & 1 & 1  \\
        \end{block}
        \end{blockarray}
        }
        ).\ x +_p y = y +_{1-p} x$
      \end{tabular}
  \end{center}
\begin{center}
        \begin{tabular}{l l}
      Skew assoc.: &  $\forall (
        \tiny{
        \begin{blockarray}{c c c c}
        & x & y & z\\
        \begin{block}{c[ccc]}
          x & 1 & 1 & 1  \\
          y & 1 & 1 & 1 \\
          z & 1 & 1 & 1\\
        \end{block}
        \end{blockarray}
        }
        ). \ (x +_p y) +_q z = x +_{pq} (y +_{\frac{(1-p)q}{1-pq}} z)$ 
    \end{tabular}
  \end{center}

and the following quantitative equations, for all $p\in(0,1)$ and $\epsilon,\delta\in[0,1]$:
\begin{center}
  \begin{tabular}{l l}
    Interpolative: &  $\forall (
      \tiny{
      \begin{blockarray}{c c c c c}
      & x & y & w & z\\
      \begin{block}{c[cccc]}
        x & 1 & 1 & \epsilon & 1 \\
        y & 1 & 1 & 1 & \delta \\
        w & 1 & 1 & 1 & 1 \\
        z & 1 & 1 & 1 & 1 \\
      \end{block}
      \end{blockarray}
      }
      ). \ x +_p y\  =_{p\epsilon + (1-p)\delta} \ w+_{p} z$.
  \end{tabular}
  \end{center}
  
\end{definition}
The first set of quantitative equations simply corresponds to the equations of convex algebras (Definition \ref{convex-algebras-definition}), which must hold for all possible interpretations,
and are therefore expressed with the discrete fuzzy relation on variables. This implies that all models $\mathbb{A}=(A,\{+^{\mathbb{A}}\}_{p},d_A)$
satisfying $\mathbb{ICA}$ are such that the underlying algebra $(A,\{+^{\mathbb{A}}\}_{p})$ is a convex algebra. The set of ``interpolative'' quantitative equations, on the other hand, have a nontrivial fuzzy relation on variables, which constrains interpretations
of $x$ and $w$ (respectively, $y$ and $z$) to have distance bounded by $\epsilon$ (respectively, bounded by $\delta$).

\begin{definition}[Kantorovich distance lifting]\label{Katorovich-defintion}
  Let $(A,d)$ be a fuzzy relation. The \emph{Kantorovich lifting of $d$} is a fuzzy relation on $D(A)$ (that is $K(d):D(A)\times D(A)\rightarrow [0,1]$) defined as follows, for all $\mu,\nu\in D(A)$:
\begin{center}
 $ K(d)(\mu,\nu) = \displaystyle \inf_{\gamma\in\Gamma(\mu,\nu)} V_{d}(\gamma) \qquad \qquad \textnormal{where}\qquad V_{d}(\gamma)= \displaystyle \sum_{a,b\in A} \gamma(a,b)\cdot d(a,b).$
\end{center}
  \end{definition}
  \begin{remark}
 Since $\Gamma(\mu,\nu)$ is nonempty $K(d)(\mu,\nu)<\infty$. Furthermore, since all values $\gamma(a,b)$ and $d(a,b)$ are in $[0,1]$, we have that $0\leq K(d)(\mu,\nu) \leq 1$. In 
  other words, $K(d)$ is indeed a fuzzy relation on $D(A)$.
  \end{remark}
  \begin{remark}
  It is well known (see e.g.~\cite[p. 68]{villani2008}) that when $d$ is a (pseudo)metric on $A$ then also $K(d)$ is a (pseudo)metric on $D(A)$. In general, however, $K(d)$ does not satisfy these properties.
  For example, if $d(a,a)=1$ for some $a\in A$, then $K(d)(\delta_a,\delta_a)=1$ thus not satisfying the properties of metrics.
  \end{remark}
\begin{proposition}\label{unit_is_nonexpansive}
  Let $(A,d)$ be a fuzzy relation. The map $\eta_A:(A,d)\rightarrow (D(A),K(d))$, defined as $\eta_A(a)=\delta_a$, is 1-Lipschitz.
  \end{proposition}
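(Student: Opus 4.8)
The plan is to unwind the definition of $K(d)$ on Dirac distributions by first identifying the set of couplings $\Gamma(\delta_a,\delta_{a'})$ explicitly. I claim this set is the singleton $\{\delta_{\langle a,a'\rangle}\}$. Indeed, suppose $\gamma\in\Gamma(\delta_a,\delta_{a'})$. The first marginal condition says $\sum_{y\in A}\gamma(x,y)=\delta_a(x)$ for all $x$, which equals $0$ whenever $x\neq a$; since all summands are nonnegative, this forces $\gamma(x,y)=0$ for every pair with $x\neq a$, i.e.\ $\gamma$ is supported on $\{a\}\times A$. Symmetrically, the second marginal condition forces $\gamma$ to be supported on $A\times\{a'\}$. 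Hence $\gamma$ is supported on $\{\langle a,a'\rangle\}$, and since $\gamma$ is a probability distribution we must have $\gamma(\langle a,a'\rangle)=1$, i.e.\ $\gamma=\delta_{\langle a,a'\rangle}$. (Conversely $\delta_{\langle a,a'\rangle}$ is trivially a coupling, as already noted in general via the independent product.)

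With this in hand, the computation is immediate: by Definition \ref{Katorovich-defintion},
\[
K(d)(\delta_a,\delta_{a'}) \;=\; \inf_{\gamma\in\Gamma(\delta_a,\delta_{a'})} V_d(\gamma) \;=\; V_d\big(\delta_{\langle a,a'\rangle}\big) \;=\; \sum_{b,c\in A} \delta_{\langle a,a'\rangle}(b,c)\cdot d(b,c) \;=\; d(a,a').
\]
In particular $K(d)(\eta_A(a),\eta_A(a')) = K(d)(\delta_a,\delta_{a'}) = d(a,a') \leq d(a,a')$, so $\eta_A\colon(A,d)\rightarrow(D(A),K(d))$ is 1-Lipschitz, as required.

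I do not anticipate a real obstacle here; the only point deserving care is the uniqueness-of-coupling claim, which is the step that does all the work, and it follows purely from the nonnegativity of the entries of $\gamma$ together with the marginal equations. (It is worth remarking in passing that the argument in fact yields the equality $K(d)(\delta_a,\delta_{a'}) = d(a,a')$, not merely the inequality needed for nonexpansiveness; this stronger fact will be convenient later when analysing the free $\mathbb{ICA}$-algebra.)
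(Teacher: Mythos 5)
Your proposal is correct and rests on the same key observation as the paper's proof: the coupling $\delta_{\langle a,a'\rangle}$ (which is exactly the independent product $\delta_a\times\delta_{a'}$ the paper uses) satisfies $V_d(\delta_{\langle a,a'\rangle})=d(a,a')$, which already bounds the infimum. The uniqueness-of-coupling argument you add is correct but superfluous for the stated 1-Lipschitz claim --- it only serves to upgrade the inequality to the equality $K(d)(\delta_a,\delta_{a'})=d(a,a')$, which the proposition does not require.
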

  \begin{proof}
    We need to show that $K(d)(\delta_a,\delta_b)\leq d(a,b)$, for all $a,b\in A$. This inequality is witnessed by the independent product $\delta_a\times\delta_b$, 
    which is a coupling, and satisfies $V_d(\delta_a\times\delta_b)=d(a,b)$.
  \end{proof}
  The first result, relating the quantitative theory $\mathbb{ICA}$ and the Kantorovich distance, states that the 
  Kantorovich distance gives a model of $\mathbb{ICA}$, a statement that can be rephrased as a soundness property.

\begin{proposition}\label{soundness:proposition1}
Let $(A,d_A)$ be a fuzzy relation and $(D(A),\{+_p\}_{p\in(0,1)})$ be the convex algebra of finitely supported probability distributions.
The quantitative algebra $(D(A),\{+_p\}_{p\in(0,1)}, K(d))$ is a model of $\mathbb{ICA}$. 
\end{proposition}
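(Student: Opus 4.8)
The plan is to check the two families of axioms constituting $\mathbb{ICA}$ separately. The first family -- idempotency, skew commutativity and skew associativity -- is quantified over the discrete fuzzy relation on the variables, so (by the observation following Definition \ref{def:quantitative:equation}) the quantitative algebra $(D(A),\{+_p\}_{p\in(0,1)},K(d))$ satisfies these equations if and only if its underlying $\Sigma_\textnormal{CA}$-algebra $(D(A),\{+_p\}_{p\in(0,1)})$ is a convex algebra, which is the standard fact recalled in Section \ref{technical:background:section}. Hence only the ``interpolative'' quantitative equations require genuine work.

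So fix $p\in(0,1)$ and $\epsilon,\delta\in[0,1]$, and let $\iota$ be an arbitrary $1$-Lipschitz interpretation of the four-element fuzzy relation $(B,d_B)$ appearing in the interpolative equation into $(D(A),K(d))$. Write $\mu_1=\iota(x)$, $\mu_2=\iota(y)$, $\nu_1=\iota(w)$, $\nu_2=\iota(z)$. Since $d_B(x,w)=\epsilon$ and $d_B(y,z)=\delta$, the $1$-Lipschitz condition gives $K(d)(\mu_1,\nu_1)\le\epsilon$ and $K(d)(\mu_2,\nu_2)\le\delta$. Because the term-interpretation commutes with the operations, $\iota(x+_p y)=\mu_1+_p\mu_2$ and $\iota(w+_p z)=\nu_1+_p\nu_2$, so it suffices to prove
\[
K(d)\big(\mu_1+_p\mu_2,\ \nu_1+_p\nu_2\big)\ \le\ p\epsilon+(1-p)\delta .
\]

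To establish this inequality I would proceed as follows. First, by Lemma \ref{Heine-Borel-lemmma} the coupling sets $\Gamma(\mu_1,\nu_1)$ and $\Gamma(\mu_2,\nu_2)$ are compact, and $V_d$ is continuous (it is a finite linear combination of the coordinates of $\gamma$), so the infima defining $K(d)(\mu_1,\nu_1)$ and $K(d)(\mu_2,\nu_2)$ are attained: pick $\gamma_1\in\Gamma(\mu_1,\nu_1)$ with $V_d(\gamma_1)=K(d)(\mu_1,\nu_1)\le\epsilon$ and $\gamma_2\in\Gamma(\mu_2,\nu_2)$ with $V_d(\gamma_2)=K(d)(\mu_2,\nu_2)\le\delta$. (One may instead use $\eta$-optimal couplings and let $\eta\to 0$, avoiding compactness.) By Lemma \ref{lemma:soundness}, $\gamma_1+_p\gamma_2\in\Gamma(\mu_1+_p\mu_2,\ \nu_1+_p\nu_2)$, so it is a candidate in the infimum defining the left-hand side above. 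Finally, $V_d$ is affine with respect to $+_p$: unwinding the definitions, $V_d(\gamma_1+_p\gamma_2)=\sum_{a,b\in A}\big(p\gamma_1(a,b)+(1-p)\gamma_2(a,b)\big)d(a,b)=pV_d(\gamma_1)+(1-p)V_d(\gamma_2)$. Combining these, $K(d)(\mu_1+_p\mu_2,\nu_1+_p\nu_2)\le V_d(\gamma_1+_p\gamma_2)=pV_d(\gamma_1)+(1-p)V_d(\gamma_2)\le p\epsilon+(1-p)\delta$, which is what we needed.

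I do not expect a serious obstacle: the statement is a soundness check. The only point requiring a little care is that $K(d)$ is defined via an infimum, so one must either invoke compactness of the coupling polytopes (Lemma \ref{Heine-Borel-lemmma}) together with continuity of $V_d$ to select optimal couplings, or run the $\eta$-approximation argument instead; the remainder is bookkeeping combining Lemma \ref{lemma:soundness} with the affineness of $\gamma\mapsto V_d(\gamma)$.
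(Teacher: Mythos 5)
Your proof is correct, and the core of it --- reducing to the interpolative axiom, extracting $K(d)(\mu_1,\nu_1)\le\epsilon$ and $K(d)(\mu_2,\nu_2)\le\delta$ from the 1-Lipschitz condition, combining couplings via Lemma \ref{lemma:soundness}, and using the affineness $V_d(\gamma_1+_p\gamma_2)=pV_d(\gamma_1)+(1-p)V_d(\gamma_2)$ --- is exactly the paper's argument. The one place you diverge is in handling the infimum defining $K(d)$: your primary route selects \emph{optimal} couplings by invoking compactness of $\Gamma(\mu_i,\nu_i)$ and continuity of $V_d$, whereas the paper runs the approximation argument you mention only parenthetically (pick $\gamma,\gamma'$ with $V_d(\gamma)\le\epsilon+\lambda$, $V_d(\gamma')\le\delta+\lambda$, then let $\lambda\to 0$). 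Both are valid, but the paper's choice is deliberate: it postpones the existence of optimal couplings to Proposition \ref{existed_optimal_coupling} and uses it only to upgrade completeness to \emph{finite-proof} completeness (Corollary \ref{completeness:kantorovich:finite}), precisely so that soundness and completeness are seen not to depend on that topological fact. This separation also matters for the generalization in Section \ref{section:generalisation}, where condition (iii) (existence of optimal couplings) is listed as an extra hypothesis needed only for compactness, not for soundness; if you bake optimal couplings into the soundness proof, that modularity is lost (e.g.\ for the $\infty$-Wasserstein case, where $V^{\oplus_p}_d$ is only lower semicontinuous). So: prefer your parenthetical $\eta$-approximation as the main argument.
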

\begin{proof}
  We know that $D(A)$ satisfies all axioms (idempotency and skew comm./assoc.) of convex algebras, and thus it satisfies
  the corresponding quantitative equations in $\mathbb{ICA}$.
  It remains only to show that it
  satisfies all instances, for $p\in(0,1)$ and $\epsilon,\delta\in[0,1]$, of the interpolative quantitative equation.
  So assume $\iota:\{x,y,w,z\}\rightarrow D(A)$ is any 1-Lipschitz interpretation of the variables and denote with $\mu_x=\iota(x)$,
  $\mu_y=\iota(y)$, $\mu_w=\iota(w)$ and $\mu_z=\iota(z)$. The 1-Lipschitz property of $\iota$ 
  amounts to say that $K(d)(\mu_x, \mu_w)\leq \epsilon$ and $K(d)(\mu_y, \mu_z)\leq \delta$. We need to prove that: $K(d)\big( \mu_x +_p \mu_y,\mu_w +_p \mu_z\big)\leq p\epsilon + (1-p)\delta$.

  By definition of $K(d)$ as an infimum, for any $\lambda>0$, we can find couplings $\gamma\in \Gamma(\mu_x,\mu_w)$ and $\gamma' \in \Gamma(\mu_y,\mu_z)$ such that:
  $
V_d(\gamma) \leq \epsilon +\lambda$
and $V_d(\gamma')  \leq \delta  +\lambda$. From $\gamma$ and $\gamma'$, we obtain by Lemma \ref{lemma:soundness} a coupling $\gamma+_p \gamma' \in \Gamma(\mu_x +_p \mu_y,\mu_w +_p \mu_z)$ such that:
\begin{center}
$
V_d(\gamma+_p \gamma') \stackrel{def}{=} \displaystyle \sum_{a,b\in A} \big(p\gamma(a,b) + (1-p)\gamma'(a,b)\big)\cdot d(a,b) = pV_d(\gamma) + (1-p)V_d(\gamma'). 
$
\end{center}
Hence $V_d(\gamma+_p \gamma') \leq p(\epsilon+\lambda) +(1-p)(\delta + \lambda)$. Since $\lambda>0$ is arbitrary, by taking $\lambda\rightarrow 0$ we deduce that 
$V_d(\gamma+_p \gamma') \leq p\epsilon +(1-p)\delta$. Hence $K(d)\big( \mu_x +_p \mu_y,\mu_w +_p \mu_z\big)\leq p\epsilon + (1-p)\delta$.
\end{proof}

Recall that given $s\in\textnormal{Terms}_{\Sigma_{\textnormal{CA}}}(A)$ a convex algebra term with variables in $A$ we write $[s]\in D(A)$ for 
the corresponding probability distribution, defined as: $[\sum^n_{i=1} p_i a_i] = \sum^n_{i=1} p_i \delta_{a_i}$.

\begin{corollary}[Soundness]\label{soundness:corollary}
  Let $(A,d_A)$ be a fuzzy relation and $s,t\in\textnormal{Terms}_{\Sigma_{\textnormal{CA}}}(A)$ be two convex algebra terms with variables in $A$.
 Then the following implication holds,  for all $\epsilon\in[0,1]$:
  \begin{center}
 $\mathbb{ICA}\vdash \forall (A,d_A).s=_\epsilon t \ \textnormal{ is derivable }\qquad \Longrightarrow \qquad K(d_A)([s],[t])\leq \epsilon$.
  \end{center}
\end{corollary}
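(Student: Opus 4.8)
The plan is to obtain the corollary by combining the soundness half of the proof system of \cite{MioEA24} with Proposition \ref{soundness:proposition1} and Proposition \ref{unit_is_nonexpansive}. Concretely: assume $\mathbb{ICA}\vdash\forall(A,d_A).s=_\epsilon t$ is derivable. Since the proof system is sound with respect to the consequence relation (as recalled in Section \ref{sec:background:qa}: $E\vdash\phi$ is derivable iff $E\models\phi$), this gives $\mathbb{ICA}\models\forall(A,d_A).s=_\epsilon t$, i.e.\ every model of $\mathbb{ICA}$ satisfies this quantitative equation.

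Next I would instantiate this at the particular model $(D(A),\{+_p\}_{p\in(0,1)},K(d_A))$, which is indeed a model of $\mathbb{ICA}$ by Proposition \ref{soundness:proposition1}. Unfolding the satisfiability relation, this says that $K(d_A)(\iota(s),\iota(t))\leq\epsilon$ for every $1$-Lipschitz interpretation $\iota\colon(A,d_A)\rightarrow(D(A),K(d_A))$, where $\iota$ is extended to $\Sigma_{\textnormal{CA}}$-terms in the usual way. I then apply this to the specific interpretation $\iota=\eta_A$, which is $1$-Lipschitz by Proposition \ref{unit_is_nonexpansive}. It remains only to identify the term extension of $\eta_A$ with the canonical map $(-)\colon\textnormal{Terms}_{\Sigma_{\textnormal{CA}}}(A)\rightarrow D(A)$: since $D(A)$ is the free convex algebra on $A$ with unit $\eta_A$, the homomorphic extension $\widehat{\eta_A}$ coincides with $s\mapsto[s]$ (concretely, for $s=\sum_{i=1}^n p_i a_i$ one has $\widehat{\eta_A}(s)=\sum_{i=1}^n p_i\delta_{a_i}=[s]$, using that homomorphisms commute with $n$-ary convex combinations). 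Hence $K(d_A)([s],[t])\leq\epsilon$, as required.

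The argument is essentially bookkeeping, and I do not expect a serious obstacle; the only point needing a moment's care is the identification of $\widehat{\eta_A}$ with the quotient map $[-]$, which follows from the universal property of $D(A)$ recalled in Section \ref{technical:background:section}. One could instead bypass Proposition \ref{soundness:proposition1} and argue directly by induction on the derivation of $\mathbb{ICA}\vdash\forall(A,d_A).s=_\epsilon t$, checking that each proof rule preserves the bound $K(d_A)([s],[t])\leq\epsilon$; but this merely re-proves a special case of the general soundness theorem and is strictly more laborious, so I would not take that route.
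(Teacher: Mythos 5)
Your proposal is correct and follows exactly the route of the paper's own proof: soundness of the proof system to pass from $\vdash$ to $\models$, instantiation at the model $(D(A),\{+_p\}_{p\in(0,1)},K(d_A))$ furnished by Proposition \ref{soundness:proposition1}, and specialization to the 1-Lipschitz interpretation $\eta_A$ from Proposition \ref{unit_is_nonexpansive}, with the identification $\widehat{\eta_A}(s)=[s]$ (which the paper treats as holding ``by definition''). No gaps; the extra care you take over the universal property of $D(A)$ is harmless and only makes the bookkeeping explicit.
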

\begin{proof}
By the properties of the proof system of quantitative algebra, $\mathbb{ICA}\vdash \forall (A,d_A).s=_\epsilon t$ is equivalent to  $\mathbb{ICA}\models \forall (A,d_A).s=_\epsilon t$.
By Proposition \ref{soundness:proposition1}, $(D(A),\{+_p\}_{p\in(0,1)}, K(d))$ is a model of $\mathbb{ICA}$.
The assumption therefore implies that $(D(A),\{+_p\}_{p\in(0,1)}, K(d))\models  (A,d_A).s=_\epsilon t$.
This means that for all 1-Lipschitz interpretations $\iota\colon A\rightarrow D(A)$ it holds that $K(d_A)(\iota(s),\iota(t))\leq \epsilon$. 
The interpretation $\eta_A$ ($a\mapsto \delta_a$) is 1-Lipschitz by Lemma \ref{unit_is_nonexpansive} and, by definition,
$\eta_A(s)=[s]$ and $\eta_A(t)=[t]$. Hence $K(d_A)([s],[t])\leq \epsilon$ holds.
\end{proof}

The following useful corollary can be also be deduced from Proposition \ref{soundness:proposition1} and the fact
that $D(A)$ is the free convex algebra on $A$. It states, 
using the termilogy established in \cite[Def 7.6]{MioEA24}, that $\mathbb{ICA}$ is a quantitative \emph{extension} of the theory of convex algebras.

\begin{corollary}\label{technical_corollary_proofsystem_extension}
  Let $(A,d_A)$ be a (possibly infinite) fuzzy relation and $s,t\in\textnormal{Terms}_{\Sigma_{\textnormal{CA}}}(A)$ be two convex algebra terms with variables in $A$.
The following are equivalent:
\begin{enumerate}
  \item $s=t$ is provable in Birkhoff's proof system from the axioms of convex algebras (Definition \ref{convex-algebras-definition}). 
\item $\mathbb{ICA}\vdash \forall  (A,d_A).s=t$ is derivable in the proof system of quantitative algebra by a finite proof. 
\item $\mathbb{ICA}\vdash \forall  (A,d_A).s=t$ is derivable in the proof system of quantitative algebra. 
\end{enumerate}
\end{corollary}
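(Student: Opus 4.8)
The plan is to establish the cycle of implications $(1)\Rightarrow(2)\Rightarrow(3)\Rightarrow(1)$, of which only the first and the last carry real content.

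For $(1)\Rightarrow(2)$, I would reuse the argument already employed in Proposition~\ref{example-UA-embedding}. The equational axioms of convex algebras (Definition~\ref{convex-algebras-definition}) occur verbatim among the quantitative equations of $\mathbb{ICA}$, carried by the \emph{discrete} fuzzy relation on variables; and over a discrete fuzzy relation every substitution is provably $1$-Lipschitz, since the right premise of the substitution rule is then empty. Consequently each rule of Birkhoff's calculus — reflexivity, symmetry, transitivity, congruence, and instantiation of axioms — has a finitary counterpart in the quantitative proof system acting on judgments of the form $\forall (B,d_1^B).\,u=v$. Copying a finite Birkhoff derivation of $s=t$ from the convex-algebra axioms therefore yields a finite proof of $\mathbb{ICA}\vdash\forall (B,d_1^B).\,s=t$ for a finite set $B$ of variables containing those of $s$ and $t$; one last application of the substitution rule along the substitution $\sigma:(B,d_1^B)\to\textnormal{Terms}_{\Sigma_\textnormal{CA}}(A,d_A)$ that maps the variables of $s,t$ to themselves (and any auxiliary variables to a fixed element of $A$) — provably $1$-Lipschitz because $d_1^B$ is discrete — produces the finite proof of $\mathbb{ICA}\vdash\forall (A,d_A).\,s=t$. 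The implication $(2)\Rightarrow(3)$ is immediate, a finite proof being in particular a proof.

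For $(3)\Rightarrow(1)$, I would route through soundness and freeness. From $\mathbb{ICA}\vdash\forall (A,d_A).\,s=t$ we obtain $\mathbb{ICA}\models\forall (A,d_A).\,s=t$ by soundness of the proof system. By Proposition~\ref{soundness:proposition1} the quantitative algebra $(D(A),\{+_p\}_{p\in(0,1)},K(d_A))$ is a model of $\mathbb{ICA}$, and by Proposition~\ref{unit_is_nonexpansive} the unit $\eta_A:(A,d_A)\to(D(A),K(d_A))$, $a\mapsto\delta_a$, is a $1$-Lipschitz interpretation; hence the satisfaction forces $\eta_A(s)=\eta_A(t)$, that is, $[s]=[t]$ in $D(A)$ (as in the proof of Corollary~\ref{soundness:corollary}, the term-extension of $\eta_A$ sends a term $u$ to $[u]$). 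Finally, because $D(A)$ is the free convex algebra on $A$ — equivalently, $D(A)$ is $\textnormal{Terms}_{\Sigma_\textnormal{CA}}(A)$ modulo provable equality from the convex-algebra axioms — the identity $[s]=[t]$ holds if and only if $s=t$ is derivable from those axioms in Birkhoff's calculus, using completeness of Birkhoff's system together with freeness of $D(A)$. This yields $(1)$ and closes the cycle.

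The main obstacle is the step $(3)\Rightarrow(1)$: it is precisely where a possibly infinitary quantitative proof must be converted back into a finitary equational one, and the whole argument rests on having a concrete model of $\mathbb{ICA}$ whose equality relation is exactly convex-algebra-provable equality — supplied here by the soundness Proposition~\ref{soundness:proposition1} together with the freeness of $D(A)$. By contrast, $(1)\Rightarrow(2)$ is routine transcription of a Birkhoff derivation and $(2)\Rightarrow(3)$ is trivial.
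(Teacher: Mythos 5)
Your proposal is correct and follows essentially the same route as the paper: $(1)\Rightarrow(2)$ by transcribing the Birkhoff derivation over the discrete fuzzy relation and then applying the (trivially provably 1-Lipschitz) substitution into $(A,d_A)$, and $(3)\Rightarrow(1)$ via soundness, the model $(D(A),\{+_p\},K(d_A))$, the 1-Lipschitz unit $\eta_A$, and freeness of $D(A)$. The only cosmetic difference is that the paper works directly with $(A,d_1^A)$ and the identity substitution rather than an auxiliary finite variable set $B$.
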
 
\begin{proof}
Direction $(i)\Rightarrow(ii)$.  Following the same argument of Proposition \ref{example-UA-embedding}, the finite proof in Birkhoff's proof system of $s=t$
can be translated to a finite proof in the proof system of quantitative algebra of $\mathbb{ICA}\vdash \forall (A,d_{1}).s=t$,
where $d_1$ is the discrete fuzzy relation. The identity substitution $\sigma(a)=a$, of type $\sigma:(A,d_1)\rightarrow (A,d_A)$
is trivially provably 1-Lipschitz, because $d_1$ is discrete. Hence we can apply the substitution rule and obtain the desired judgment $\mathbb{ICA}\vdash \forall (A,d_A).s=t$.

Direction $(ii)\Rightarrow(iii)$ is trivial. 

Direction $(iii)\Rightarrow(i)$. Assume $\mathbb{ICA}\vdash \forall  (A,d_A).s=t$ is derivable.
By the soundness of the proof system, 
this means that, $\mathbb{ICA}\models \forall  (A,d_A).s=t$, i.e.~for any quantitative algebra $\mathbb{A}$ that is a model of $\mathbb{ICA}$, it holds that $\mathbb{A}\models  \forall  (A,d_A).s=t$.
Consider as $\mathbb{A}$ the quantitative algebra $(D(A),\{+_p\}_{p\in(0,1)}, K(d_A))$. Hence $\iota(s)=\iota(t)$ holds in $\mathbb{A}$
for all 1-Lipschitz interpretations $\iota:(A,d_A)\rightarrow (D(A),K(d_A))$. The intepretation $\eta_A(a)=\delta_a$ is 1-Lipschitz by Proposition \ref{unit_is_nonexpansive}.
Hence $\eta_A(s)=\eta_A(t)$. Since $(D(A), \{+_p\}_{p\in(0,1)})$ is the free convex algebra on $A$, this implies that $s=t$ is derivable in Birkhoff's proof system from the axioms of convex algebra.
\end{proof}


Note that the above Corollary (direction $(iii)\Rightarrow (ii)$) establishes half of what is required to conclude that $\mathbb{ICA}$ is compact.
We now proceed showing that the quantitative theory $\mathbb{ICA}$ and the
Kantorovich distance are also related by a completeness property (Proposition \ref{completeness:kantorovich} below). Before stating it, we prove
the following useful, purely syntactic,  lemma.

\begin{lemma}\label{technical_lemma_proofsystem}
Let $(A,d)$ be a (possibly infinite) fuzzy relation. Let $e(\vec{x})\in \textnormal{Terms}_{\Sigma_\textnormal{CA}}(\{x_1,\dots, x_n\} )$ be a convex algebra expression 
on $n$ variables of the form (using $n$-ary notation) 
$e(\vec{x})=\sum^n_{i=1} p_i x_i$. Let $\vec{a}=(a_1,\dots, a_n)$ and $\vec{b}=(b_1,\dots, b_n)$ be tuples in $A$. Denote with $e(\vec{a})$ and $e(\vec{b})$
the expressions (with variables in $A$) obtained by substituting $\vec{x}$ with $\vec{a}$ and $\vec{b}$.
Then the following judgment is derivable, by a finite proof, in the proof system of quantitative algebra:
\begin{center}
$
\mathbb{ICA} \vdash \forall (A,d). e(\vec{a}) =_\epsilon e(\vec{b}) \qquad\qquad \textnormal{ where }\epsilon =  \sum^n_{i=1} p_i\cdot d(a_i,b_i)\ .
$
\end{center}
\end{lemma}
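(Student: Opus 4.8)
The plan is to prove the statement by induction on the structure of the $n$-ary convex expression $e(\vec x)=\sum_{i=1}^n p_i x_i$, exploiting the recursive translation of $n$-ary combinations into $\Sigma_\textnormal{CA}$-terms given in Section~\ref{technical:background:section}, together with the interpolative quantitative equation and the basic axioms of the proof system (the axiom deriving $\forall(A,d).\,a=_{d(a,a')}a'$ from the context, the weakening rule, and the finitary substitution rule). The base case is when $e$ has a single effective variable, i.e.~$n=1$ with $p_1=1$: then $e(\vec a)=a_1$ and $e(\vec b)=b_1$, and the judgment $\mathbb{ICA}\vdash\forall(A,d).\,a_1=_{d(a_1,b_1)}b_1$ is an axiom, with $\epsilon=p_1\cdot d(a_1,b_1)=d(a_1,b_1)$ as required.

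For the inductive step I would follow the same case split used in the definition of the translation. If $p_1=0$, then $e(\vec x)=\sum_{i=2}^n p_i x_i$ and the claim is immediate from the induction hypothesis applied to the shorter tuple. If $p_1=1$ we are in the base case. Otherwise $0<p_1<1$ and $e(\vec x)=x_1+_{p_1}e'(\vec x')$ where $e'(\vec x')=\sum_{i=2}^n \frac{p_i}{1-p_1}x_i$. By the induction hypothesis, $\mathbb{ICA}\vdash\forall(A,d).\,e'(\vec a')=_{\delta}e'(\vec b')$ is finitely derivable, where $\delta=\sum_{i=2}^n\frac{p_i}{1-p_1}d(a_i,b_i)$; and the axiom gives $\mathbb{ICA}\vdash\forall(A,d).\,a_1=_{d(a_1,b_1)}b_1$. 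Now I want to combine these two judgments, with mixing weight $p_1$, into a bound on $e(\vec a)=a_1+_{p_1}e'(\vec a')$ versus $e(\vec b)=b_1+_{p_1}e'(\vec b')$. This is exactly what the interpolative quantitative equation delivers: instantiating its variables $x,y,w,z$ via a substitution $\sigma$ sending $x\mapsto a_1$, $w\mapsto b_1$, $y\mapsto e'(\vec a')$, $z\mapsto e'(\vec b')$ (all terms over $A$), with $\epsilon:=d(a_1,b_1)$ and the interpolative-equation parameter $\delta$ as above, yields
\[
\mathbb{ICA}\vdash\forall(A,d).\ a_1+_{p_1}e'(\vec a')\ =_{p_1 d(a_1,b_1)+(1-p_1)\delta}\ b_1+_{p_1}e'(\vec b').
\]
A short arithmetic check shows $p_1 d(a_1,b_1)+(1-p_1)\delta=p_1 d(a_1,b_1)+\sum_{i=2}^n p_i d(a_i,b_i)=\sum_{i=1}^n p_i d(a_i,b_i)=\epsilon$, which is the desired bound.

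The main obstacle, and the point requiring the most care, is the application of the substitution rule: it may only be used with \emph{provably} 1-Lipschitz substitutions, and the interpolative equation's context is not discrete — it imposes $d(x,w)\le\epsilon$ and $d(y,z)\le\delta$. So I must discharge the side premises of the substitution rule, namely $\mathbb{ICA}\vdash\forall(A,d).\,\sigma(x)=_{\epsilon}\sigma(w)$, i.e.~$\mathbb{ICA}\vdash\forall(A,d).\,a_1=_{d(a_1,b_1)}b_1$ (the axiom again, with the chosen $\epsilon=d(a_1,b_1)$), and $\mathbb{ICA}\vdash\forall(A,d).\,\sigma(y)=_{\delta}\sigma(z)$, i.e.~$\mathbb{ICA}\vdash\forall(A,d).\,e'(\vec a')=_{\delta}e'(\vec b')$ (exactly the induction hypothesis). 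All other entries of the context matrix are $1$ and impose no constraint, so the substitution is provably 1-Lipschitz and the rule applies; since there are only finitely many such side premises, the resulting proof is finite. One should also note that some of the intermediate $d(a_i,b_i)$ or the weights could force edge cases (e.g.~if $1-p_1$ times $\delta$ exceeds what is literally written, or if $e'$ degenerates further), but these are handled either by the weakening rule or by the recursion bottoming out, so the induction goes through cleanly.
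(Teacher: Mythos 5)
Your proof is correct and follows essentially the same route as the paper: induction on the $n$-ary expression, with the base case discharged by the axiom yielding $\forall(A,d).\,a_1=_{d(a_1,b_1)}b_1$, and the inductive step obtained by instantiating the interpolative equation via the substitution $x\mapsto a_1$, $y\mapsto e'(\vec a)$, $w\mapsto b_1$, $z\mapsto e'(\vec b)$, whose provably 1-Lipschitz side premises are exactly the axiom and the induction hypothesis. The only (harmless) difference is that you explicitly treat the degenerate weights $p_1\in\{0,1\}$, which the paper leaves implicit in its $n$-ary notation; no weakening is actually needed since the arithmetic identity $p_1 d(a_1,b_1)+(1-p_1)\delta=\sum_i p_i d(a_i,b_i)$ is exact.
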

\begin{proof}
By induction on $n$. 

Base case ($n=1$): in this case, $e(\vec{x})=x_1$ and we need to show a finite proof of the judgment:
$\mathbb{ICA} \vdash \forall (A,d). a_1 =_{d(a_1,b_1)} b_1$. And indeed it is simply derivable as an instance of one of the axioms of the proof system which 
uses the information provided by the distance $d$ on $A$.

Inductive Case ($n>1$): in this case $e(\vec{x}) = x_1 +_{p_1} e'(\vec{x})$ where the expression $e'(\vec{x})=\sum^n_{i=2} \frac{p_i}{1-p_1} x_i$ only involves variables $\{x_2,\dots, x_n\}$.
We need to exhibit a finite proof of the judgment 
\begin{center}
$
\mathbb{ICA} \vdash \forall (A,d). a_1 +_{p_1} e'(\vec{a}) =_{\epsilon} b_1 +_{p_1}e'(\vec{b}).
$
\end{center}
We derive this judgment by taking the following instance of the Interpolation axiom (in $\mathbb{ICA}$):
\begin{center}
  \begin{tabular}{l l}
    &  
    $\forall (
      \tiny{
      \begin{blockarray}{c c c c c}
      & x & y & w & z\\
      \begin{block}{c[cccc]}
        x & 1 & 1 & d(a_1,b_1) & 1 \\
        y & 1 & 1 & 1 & \epsilon' \\
        w & 1 & 1 & 1 & 1 \\
        z & 1 & 1 & 1 & 1 \\
      \end{block}
      \end{blockarray}
      }
      ). \ x +_{p_1} y\  =_{p_1d(a_1,b_1) + (1-p_1)\epsilon'} \ w+_{p_1} z$
  \end{tabular}
  \end{center}
where $\epsilon' = \sum^n_{i=2}\frac{p_i}{(1-p_1)}d(a_i,b_i) $, and applying the substitution $\tau:\{x,y,w,z\}\rightarrow \textnormal{Terms}_{\Sigma_\textnormal{CA}}(A)$ defined as: 
\begin{center}
$
\tau(x) = a_1 \qquad \tau(y) = e'(\vec{a}) \qquad \tau(w) = b_1  \qquad   \tau(z)= e'(\vec{b}).
$
\end{center}
This will yield the desired finite proof, since $\epsilon = p_1d(a_1,b_1) + (1-p_1)\epsilon'$. The application of the substitution rule requires showing that $\tau$ is provably 1-Lipschitz, i.e.,
we need to provide finite sub-derivations of the two judgments:
\begin{center}
$
\mathbb{ICA}\vdash \forall (A,d_A) . \tau(x) =_{d_{X}(x,w)}\tau(w) \qquad \qquad 
\mathbb{ICA}\vdash \forall (A,d_A) . \tau(y) =_{d_{X}(y,z)}\tau(z) 
$
\end{center}
where $d_X$ is the fuzzy relation on $\{x,y,w,z\}$ used in the instance of the Interpolation rule above.

For the left-judgment we need to derive $\forall (A,d) . a_1 =_{d(a_1, b_1)} b_1$. This, like in the base case, is derivable by instantiating one of the axioms of the proof system.
For the right-judgment we need a finite derivation  $\forall (A,d) . e'(\vec{a}) =_{\epsilon'} e'(\vec{b})$, and this exists by inductive hypothesis.
Hence $\tau$ is provably 1-Lipschitz and the substitution rule can be applied. 
\end{proof}



\begin{proposition}[Completeness]\label{completeness:kantorovich}
  Let $(A,d_A)$ be a fuzzy relation and $s,t\in\textnormal{Terms}_{\Sigma_{\textnormal{CA}}}(A)$ be two convex algebra terms with variables in $A$.
 Then the following implication holds,  for all $\epsilon\in[0,1]$:

\begin{center}
   $K(d)([s],[t])\leq \epsilon \qquad \Longrightarrow \qquad\mathbb{ICA}\vdash \forall (A,d).\ s =_\epsilon  t\ $ is derivable. 
\end{center}
\end{proposition}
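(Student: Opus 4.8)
The plan is to reduce the statement to the purely syntactic machinery of Lemma~\ref{technical_lemma_proofsystem}. Suppose $K(d)([s],[t])\leq\epsilon$. Writing $[s]=\mu$ and $[t]=\nu$, both finitely supported distributions on $A$, we know that $\Gamma(\mu,\nu)$ is compact (Lemma~\ref{Heine-Borel-lemmma}) and $V_d$ is continuous, so the infimum defining $K(d)(\mu,\nu)$ is attained: there is a coupling $\gamma\in\Gamma(\mu,\nu)$ with $V_d(\gamma)=\sum_{a,b\in A}\gamma(a,b)\,d(a,b)\leq\epsilon$. Enumerate the finitely many pairs $(a_1,b_1),\dots,(a_n,b_n)$ in the support of $\gamma$, and let $p_i=\gamma(a_i,b_i)$, so $\sum_i p_i=1$. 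Then the convex expression $e(\vec x)=\sum_{i=1}^n p_i x_i$ satisfies $[e(\vec a)]=\sum_i p_i\delta_{a_i}=\mu=[s]$ (because the first marginal of $\gamma$ is $\mu$) and likewise $[e(\vec b)]=\nu=[t]$.

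The key point is now that since $[e(\vec a)]=[s]$, the terms $e(\vec a)$ and $s$ are provably equal from the axioms of convex algebras in Birkhoff's system (as $D(A)$ is the free convex algebra on $A$), hence by Corollary~\ref{technical_corollary_proofsystem_extension} we get a finite proof of $\mathbb{ICA}\vdash\forall(A,d).\,s=e(\vec a)$, and similarly $\mathbb{ICA}\vdash\forall(A,d).\,e(\vec b)=t$. Meanwhile Lemma~\ref{technical_lemma_proofsystem} gives a finite proof of $\mathbb{ICA}\vdash\forall(A,d).\,e(\vec a)=_{\epsilon'}e(\vec b)$ where $\epsilon'=\sum_i p_i\,d(a_i,b_i)=V_d(\gamma)\leq\epsilon$. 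Chaining these three judgments with the transitivity rule for $=_\star$ (using the equalities to rewrite both endpoints) yields $\mathbb{ICA}\vdash\forall(A,d).\,s=_{\epsilon'}t$, and then one application of the weakening rule (since $\epsilon\geq\epsilon'$) gives $\mathbb{ICA}\vdash\forall(A,d).\,s=_\epsilon t$. Every step is finitary, so the resulting proof never invokes the infinitary rule.

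The main obstacle I anticipate is the very first step: justifying that the infimum in the definition of $K(d)$ is actually attained, so that we obtain a single coupling $\gamma$ with which to run the finite argument. This is exactly where compactness of $\Gamma(\mu,\nu)$ (Lemma~\ref{Heine-Borel-lemmma}) and continuity of $\gamma\mapsto V_d(\gamma)$ — a finite sum of the coordinates of $\gamma$ weighted by the fixed values $d(a,b)$ — are needed; without attainment one would only get couplings $\gamma_k$ with $V_d(\gamma_k)\to\epsilon$ from above, producing proofs of $s=_{\epsilon+1/k}t$ for all $k$ and forcing the infinitary rule to pass to the limit $\epsilon$. A secondary, more bookkeeping-level concern is making sure the substitution in Lemma~\ref{technical_lemma_proofsystem} and the passage through Corollary~\ref{technical_corollary_proofsystem_extension} are compatible with the ambient context $(A,d)$ rather than the discrete one, but since both results are already stated for an arbitrary fuzzy relation $(A,d)$ this is immediate.
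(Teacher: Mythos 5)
Your proof is correct, and the core machinery is the same as the paper's: rewrite $s$ and $t$ into a common shape $e(\vec a)$, $e(\vec b)$ dictated by a coupling, invoke Corollary~\ref{technical_corollary_proofsystem_extension} for the two provable equalities, Lemma~\ref{technical_lemma_proofsystem} for the quantitative step, and combine by congruence and weakening. Where you diverge is in how you handle the infimum. The paper's proof of this Proposition deliberately does \emph{not} use attainment of the infimum: its Step~(i) establishes, for \emph{every} coupling $\gamma$, a finite derivation of $s=_{V_d(\gamma)}t$, and its Step~(ii) then collects countably many such derivations and applies the infinitary rule to reach $\inf_\gamma V_d(\gamma)$, followed by weakening. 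You instead front-load the existence of an optimal coupling (compactness of $\Gamma(\mu,\nu)$ plus continuity of $V_d$, which is exactly the paper's Proposition~\ref{existed_optimal_coupling}), so your argument never touches the infinitary rule and in fact directly establishes the stronger Corollary~\ref{completeness:kantorovich:finite}. Both are valid; the paper's version buys a cleaner separation of concerns --- it shows that completeness of $\mathbb{ICA}$ with respect to $K(d)$ holds independently of whether optimal couplings exist, which matters for the generalization in Theorem~\ref{generalisation:theorem:new}, where condition~(iii) on optimal couplings is an \emph{extra} hypothesis needed only for compactness, not for completeness --- while your version gets to the finite proof (the payload actually needed for Theorem~\ref{theorem:ica:compact}) in one pass. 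Your closing remark correctly identifies that without attainment one would be forced into the paper's Step~(ii), so you have effectively reconstructed both halves of the paper's argument.
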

\begin{proof}

Let us denote  $\mu=[s]$ and $\nu=[t]$. The proof involves two steps: 
\begin{enumerate}
  \item We first show that for every coupling $\gamma\in\Gamma(\mu,\nu)$ there exists a finite 
derivation of  the judgmet $\mathbb{ICA}\vdash \forall (A,d).\ s =_{\lambda}  t$, where $\lambda=V_d(\gamma)$ specified as in Definition \ref{Katorovich-defintion}.
\item Since $K(d)(\mu,\nu)\leq \epsilon$ by assumption, and $K(d)(\mu,\nu)=\inf_{\gamma}V_d(\gamma)$ by definition, we have that $\inf_\gamma V(\gamma)\leq \epsilon$. 
We can collect countably many derivations of (Step i) and, by applying the infinitary rule, obtain a derivation of:  
\begin{center}
$\mathbb{ICA}\vdash \forall (A,d).\ s =_{\delta}  t \qquad \qquad \delta=\inf_{\gamma}V_d(\gamma)$
\end{center}
and from this the desired derivation of 
  ($\mathbb{ICA}\vdash \forall (A,d).\ s =_{\epsilon}  t$)
is obtained by applying a weakening rule of the proof system, since $\epsilon \geq \delta$. 
\end{enumerate}
So it remains to outline the details of (Step i). Fix an arbitrary coupling $\gamma\in\Gamma(\mu,\nu)$.
By the defining properties of couplings, we have that, for all $a,b\in A$:
\begin{center}
$
\mu(a)= \sum_{b\in A} \gamma(a,b) \qquad \qquad \nu(b) = \sum_{a\in A} \gamma(a,b).
$
\end{center}
This implies that the convex algebra terms $s$ and $t$ can be proven equal (in the theory of convex algebras, using Birkhoff's proof system)
to the following convex algebra ($n$-ary) expressions:
\begin{center}
$
s = \displaystyle  \sum_{a\in A} (\sum_{b\in A} \gamma(a,b))\, a =  \displaystyle  \sum_{a,b\in A} \gamma(a,b)\, a   \qquad \qquad t = \displaystyle \sum_{b\in A}  ( \sum_{a\in A} \gamma(a,b)) \, b =  \displaystyle  \sum_{a,b\in A} \gamma(a,b)\, b .
$
\end{center}
In other words, there is a convex algebra expression $e(\vec{x})= \sum \gamma(a,b)\, x_{a,b}$, in the finite set of variables variables $\{ x_{a,b}\mid a,b\in supp(\mu)\cup supp(\nu)\}$,
such that the following equalities are provable from the axioms of convex algebras:
$s  = e(\vec{a})$ and $t = e(\vec{b})$, where $e(\vec{a})$ and $e(\vec{b})$ denote the substitutions of each of the variables $x_{a,b}$ in $e(\vec{x})$ with $\vec{a}$ and $\vec{b}$, respectively. 
By Corollary \ref{technical_corollary_proofsystem_extension}, these equalities have corresponding finite derivations of: 
$\mathbb{ICA}\vdash \forall (A,d). s =  e(\vec{a})$ and $\mathbb{ICA}\vdash \forall (A,d). t =  e(\vec{b}). $
From lemma \ref{technical_lemma_proofsystem} we obtain a finite derivation of the following judgment:
\begin{center}
$
\mathbb{ICA}\vdash \forall (A,d).\  e(\vec{a}) =_{\lambda} e(\vec{b}) \qquad \qquad \lambda = \displaystyle \sum_{a,b} \gamma(a,b)d(a,b)=V_d(\gamma)\ .
$
\end{center}
We can combine these three derivations, using a deductive rule of the proof system (Rule 4.j in \cite[Def 4.1]{MioEA24} called ``congruence''), to obtain the desired finite derivation of $\mathbb{ICA}\vdash \forall (A,d).\  s =_{\lambda} t$.
\end{proof}

\begin{corollary}
The free $\mathbb{ICA}$-quantitative algebra generated by the fuzzy relation $(A,d_A)$ is isomorphic to $(D(A),\{+_p\}_{p\in(0,1)},K(d_A))$.
\end{corollary}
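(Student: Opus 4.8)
The plan is to invoke the explicit description of free quantitative algebras recalled at the end of Section~\ref{sec:background:qa}: the free $\mathbb{ICA}$-quantitative algebra generated by the fuzzy relation $(A,d_A)$ is $\big(\textnormal{Terms}_{\Sigma_{\textnormal{CA}}}(A)/_\equiv,\ \{+^F_p\}_{p\in(0,1)},\ d_F\big)$, where $s\equiv t$ iff $\mathbb{ICA}\vdash\forall(A,d_A).s=t$, the operations $+^F_p$ act on $\equiv$-classes by applying $+_p$ to representatives, and $d_F([s]_\equiv,[t]_\equiv)\leq\epsilon$ iff $\mathbb{ICA}\vdash\forall(A,d_A).s=_\epsilon t$. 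It then suffices to exhibit an isomorphism of quantitative algebras between this structure and $(D(A),\{+_p\}_{p\in(0,1)},K(d_A))$.

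First I would take as candidate the map $\Phi\colon\textnormal{Terms}_{\Sigma_{\textnormal{CA}}}(A)/_\equiv\to D(A)$ sending $[s]_\equiv$ to the associated distribution $[s]$. Since $D(A)$ is the free convex algebra on $A$, for convex algebra terms $s,t$ one has $[s]=[t]$ in $D(A)$ precisely when $s=t$ is derivable from the axioms of convex algebras in Birkhoff's system; by Corollary~\ref{technical_corollary_proofsystem_extension} this holds iff $\mathbb{ICA}\vdash\forall(A,d_A).s=t$, i.e.\ iff $s\equiv t$. Hence $\Phi$ is well defined and injective. It is surjective because every $\mu\in D(A)$ equals $[s]$ for a suitable term $s$, and it is a homomorphism of convex algebras because $[\cdot]$ commutes with the convex operations: $\Phi([s]_\equiv+^F_p[t]_\equiv)=[s+_p t]=[s]+_p[t]=\Phi([s]_\equiv)+_p\Phi([t]_\equiv)$.

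It then remains to check that $\Phi$ is an isometry of fuzzy relations, i.e.\ $d_F([s]_\equiv,[t]_\equiv)=K(d_A)([s],[t])$ for all $s,t$; this makes both $\Phi$ and $\Phi^{-1}$ $1$-Lipschitz and completes the argument. By the definition of $d_F$ together with the coincidence of $\mathbb{ICA}\vdash$ and $\mathbb{ICA}\models$, we have $d_F([s]_\equiv,[t]_\equiv)\leq\epsilon$ iff $\mathbb{ICA}\vdash\forall(A,d_A).s=_\epsilon t$. The Soundness corollary (Corollary~\ref{soundness:corollary}) gives the implication from this to $K(d_A)([s],[t])\leq\epsilon$, and the Completeness proposition (Proposition~\ref{completeness:kantorovich}) gives the reverse implication. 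Thus $d_F([s]_\equiv,[t]_\equiv)\leq\epsilon\Leftrightarrow K(d_A)([s],[t])\leq\epsilon$ for every $\epsilon\in[0,1]$, and since both quantities lie in $[0,1]$ this yields equality.

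I do not expect a genuinely hard step: the two substantive ingredients, namely the identification of the carrier (via $D(A)$ being the free convex algebra, together with Corollary~\ref{technical_corollary_proofsystem_extension}) and the matching of distances (via Corollary~\ref{soundness:corollary} and Proposition~\ref{completeness:kantorovich}), have already been established. The only point demanding a little care is the well-definedness of $\Phi$ on $\equiv$-classes, which is exactly what Corollary~\ref{technical_corollary_proofsystem_extension} is designed to provide; everything else is routine bookkeeping.
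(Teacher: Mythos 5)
Your proposal is correct and follows essentially the same route as the paper: identify the carrier $\textnormal{Terms}_{\Sigma_{\textnormal{CA}}}(A)/_\equiv$ with the free convex algebra $D(A)$ via Corollary~\ref{technical_corollary_proofsystem_extension}, and match $d_F$ with $K(d_A)$ via the soundness and completeness results. You merely spell out the bookkeeping (well-definedness, injectivity, surjectivity, the homomorphism property of $\Phi$) that the paper leaves implicit.
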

\begin{proof}
As discussed at the end of Section \ref{sec:background:qa}, the carrier of the free quantitative algebra is $\textnormal{Terms}_{\Sigma_{\textnormal{CA}}}(A)/_{\equiv}$ by the relation $\equiv$ of provable equality, and this coincides (by Corollary \ref{technical_corollary_proofsystem_extension}),
 with provable equality in Birkhoff's proof system from the axioms of convex algebras. Hence the carrier is $(D(A) ,\{+_p\}_{p\in(0,1)})$, the free convex algebra on $A$.
The distance $d_F$ of the free quantitative algebra is defined as provable distance and, by the soundness (\ref{soundness:proposition1}) and completeness (\ref{completeness:kantorovich}) results, it coincides with $K(d_A)$.
\end{proof}

We are finally ready to show that the quantitative theory $\mathbb{ICA}$ is compact. 
The statement of completeness (Proposition \ref{completeness:kantorovich}) does not assert the existence
of a finite proof. By inspection of its proof (specifically Step (ii)), this is due to one single application of the infinitary rule.
But the following well known Proposition \ref{existed_optimal_coupling}, stating that there always exists an \emph{optimal} coupling $\gamma$ achieving the infimium in the definition of $K(d)$, 
ensures that Step (ii) is unnecessary. Hence the statement of the completeness Proposition \ref{completeness:kantorovich} can be strengthen 
by stating that a \emph{finite} proof exists (Corollary \ref{completeness:kantorovich:finite} below). We have delayed this simple observation, and its consequence, to highlight that the soundness and completeness results do not depend on 
the assumption that optimal couplings exist.

\begin{proposition}\label{existed_optimal_coupling}
Let $(A,d)$ be  a fuzzy relation. Then, for every $\mu,\nu\in D(A)$ there exists an optimal coupling $\gamma_\star \in\Gamma(\mu,\nu)$ achieving the infimum: $V_d(\gamma_\star)=K(d)(\mu,\nu)$.
\end{proposition}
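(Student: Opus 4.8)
The plan is to invoke the extreme value theorem: I will exhibit $V_d$ as a continuous real-valued function on the nonempty compact set $\Gamma(\mu,\nu)$, and conclude that it attains its infimum.

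First I would reduce to a finite-dimensional situation. Let $S = supp(\mu)\cup supp(\nu)$, which is a finite set, say with $|S| = n$. Every coupling $\gamma\in\Gamma(\mu,\nu)$ has support contained in $S\times S$: if $\gamma(a,b)>0$ then, by the marginal conditions, $\mu(a)\geq\gamma(a,b)>0$ and $\nu(b)\geq\gamma(a,b)>0$, so $a,b\in S$. Hence $\Gamma(\mu,\nu)$ can be identified with a subset of $[0,1]^{S\times S}\cong[0,1]^{n\times n}$ with the subspace topology, and by Lemma \ref{Heine-Borel-lemmma} this set is nonempty and compact.

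Next I would check that the restriction of $V_d$ to $\Gamma(\mu,\nu)$ is continuous. Since every $\gamma\in\Gamma(\mu,\nu)$ is supported on $S\times S$, the ostensibly infinite sum $V_d(\gamma)=\sum_{a,b\in A}\gamma(a,b)\cdot d(a,b)$ collapses to the finite sum $\sum_{a,b\in S}\gamma(a,b)\cdot d(a,b)$. As a function of the coordinates $\gamma(a,b)$, this is an affine (indeed linear) map $[0,1]^{n\times n}\to[0,1]$ with coefficients $d(a,b)\in[0,1]$, and such a map is continuous (in fact Lipschitz).

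Finally I would apply the standard fact that a continuous real-valued function on a nonempty compact topological space attains its minimum. Applied to $V_d$ on $\Gamma(\mu,\nu)$, this yields a coupling $\gamma_\star\in\Gamma(\mu,\nu)$ with $V_d(\gamma_\star)=\min_{\gamma\in\Gamma(\mu,\nu)}V_d(\gamma)=\inf_{\gamma\in\Gamma(\mu,\nu)}V_d(\gamma)=K(d)(\mu,\nu)$, which is exactly the claim. There is no genuine obstacle in this argument; the only point deserving a moment's care is that $A$ may be infinite, which could in principle compromise both the finiteness of the sum defining $V_d$ and the applicability of Heine--Borel --- but, as noted above, couplings of finitely supported distributions are themselves finitely supported, so the whole situation is finite-dimensional and these concerns evaporate.
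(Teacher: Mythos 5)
Your argument is correct and follows essentially the same route as the paper: compactness of $\Gamma(\mu,\nu)$ via Lemma \ref{Heine-Borel-lemmma}, continuity of $V_d$, and the extreme value theorem. You actually spell out more detail than the paper does (the reduction to the finite support $S\times S$ and the linearity of $V_d$), which is a reasonable elaboration rather than a different proof.
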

\begin{proof}
  By Proposition \ref{Heine-Borel-lemmma}, $\Gamma(\mu,\nu)$ is a compact topological space and $V_d:\Gamma(\mu,\nu)\rightarrow [0,1]$ is a continuous function and, as such, it attains a minimum value (see \cite[Thm 4, p. 362]{bourbaki_book}).
\end{proof}
\begin{corollary}[Completeness wrt finite proofs]\label{completeness:kantorovich:finite}
  Let $(A,d)$ be a fuzzy relation and $s,t\in\textnormal{Terms}_{\Sigma_{\textnormal{CA}}}(A)$ be two convex algebra terms with variables in $A$.
  Then, for all $\epsilon\in[0,1]$:
\begin{center}
   $K(d)([s],[t])\leq \epsilon \qquad \Longrightarrow \qquad\mathbb{ICA}\vdash \forall (A,d).\ s =_\epsilon  t\ $ is derivable with a finite proof. 
\end{center}
\end{corollary}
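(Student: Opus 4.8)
The plan is to reopen the proof of Proposition~\ref{completeness:kantorovich} and notice that its sole appeal to the infinitary rule, namely Step~(ii), becomes superfluous in the light of Proposition~\ref{existed_optimal_coupling}. Recall that Step~(i) of that proof attaches to \emph{each} coupling $\gamma\in\Gamma([s],[t])$ a \emph{finite} derivation of $\mathbb{ICA}\vdash\forall(A,d).\,s=_{V_d(\gamma)}t$, assembled from three finitary ingredients: the finite derivations of $\mathbb{ICA}\vdash\forall(A,d).\,s = e(\vec a)$ and $\mathbb{ICA}\vdash\forall(A,d).\,t = e(\vec b)$ obtained via Corollary~\ref{technical_corollary_proofsystem_extension} (hence ultimately from Birkhoff-provability of convex-algebra identities), the finite derivation of $\mathbb{ICA}\vdash\forall(A,d).\,e(\vec a) =_{V_d(\gamma)} e(\vec b)$ supplied by Lemma~\ref{technical_lemma_proofsystem}, and a single application of the congruence rule. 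In particular, Step~(i) never uses the infinitary rule; it is only Step~(ii), which collects the countably many derivations produced by Step~(i) and passes to their infimum, that does.

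First I would invoke Proposition~\ref{existed_optimal_coupling} to pick an \emph{optimal} coupling $\gamma_\star\in\Gamma([s],[t])$ with $V_d(\gamma_\star)=K(d)([s],[t])$. Applying Step~(i) of the proof of Proposition~\ref{completeness:kantorovich} to this single coupling $\gamma_\star$ already yields a finite proof of
\begin{center}
$\mathbb{ICA}\vdash\forall(A,d).\ s =_{K(d)([s],[t])} t.$
\end{center}
Since by hypothesis $\epsilon\geq K(d)([s],[t])$, one further step — an application of the weakening rule of the proof system (the rule displayed in Section~\ref{sec:background:qa}, which weakens a bound $=_{\epsilon'}$ to any $=_\delta$ with $\delta\ge\epsilon'$) — turns this into a finite proof of $\mathbb{ICA}\vdash\forall(A,d).\,s=_\epsilon t$, as required.

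I do not expect any real obstacle here: the only point needing care is to verify that Step~(i) of the proof of Proposition~\ref{completeness:kantorovich} is, as claimed, genuinely finitary — which it is, by the analysis above — so that replacing the whole family of couplings by the single optimal $\gamma_\star$ eliminates Step~(ii) outright. As a closing remark, combining this corollary with Corollary~\ref{technical_corollary_proofsystem_extension} (direction $(iii)\Rightarrow(ii)$), which handles quantitative equations of the form $\forall(A,d_A).\,s=t$, and with the soundness Proposition~\ref{soundness:proposition1} (used, exactly as in Corollary~\ref{soundness:corollary}, to reduce $\mathbb{ICA}\models\forall(A,d_A).\,s=_\epsilon t$ to $K(d_A)([s],[t])\leq\epsilon$ through the $1$-Lipschitz unit $\eta_A$), one concludes that the quantitative theory $\mathbb{ICA}$ is compact in the sense of Definition~\ref{compact_theory}.
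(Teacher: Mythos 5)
Your proposal is correct and follows exactly the paper's intended argument: the paper likewise derives this corollary by picking an optimal coupling $\gamma_\star$ via Proposition~\ref{existed_optimal_coupling}, observing that Step~(i) of the proof of Proposition~\ref{completeness:kantorovich} is entirely finitary, and noting that applying it to $\gamma_\star$ alone (followed by weakening) eliminates the single use of the infinitary rule in Step~(ii). Nothing to add.
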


\begin{theorem}\label{theorem:ica:compact}
The quantitative equational theory $\mathbb{ICA}$ is compact.
\end{theorem}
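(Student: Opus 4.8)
The plan is to case-split on the two possible shapes of a quantitative equation over the signature $\Sigma_\textnormal{CA}$ and, in each case, reduce the existence of a finite proof to results already established. So fix a quantitative equation $\phi$ with $\mathbb{ICA}\models\phi$; since $\Sigma_\textnormal{CA}$ is the ambient signature, $\phi$ is either of the form $\forall(B,d_B).\,s=t$ or of the form $\forall(B,d_B).\,s=_\epsilon t$, with $s,t\in\textnormal{Terms}_{\Sigma_\textnormal{CA}}(B)$.

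In the equality case, $\mathbb{ICA}\models\forall(B,d_B).\,s=t$ is equivalent to $\mathbb{ICA}\vdash\forall(B,d_B).\,s=t$ by the soundness and completeness of the proof system of quantitative algebra. This is precisely item (iii) of Corollary~\ref{technical_corollary_proofsystem_extension}, and the equivalence $(iii)\Leftrightarrow(ii)$ of that corollary immediately upgrades it to the existence of a \emph{finite} proof --- concretely, one obtained by translating a Birkhoff derivation of $s=t$ from the axioms of convex algebras and composing with the identity substitution, which is provably $1$-Lipschitz because the discrete fuzzy relation is involved.

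In the quantitative-equality case, the key step is to convert the semantic hypothesis into a statement about the Kantorovich distance. By Proposition~\ref{soundness:proposition1} the quantitative algebra $(D(B),\{+_p\}_{p\in(0,1)},K(d_B))$ is a model of $\mathbb{ICA}$, hence it satisfies $\phi$; by Proposition~\ref{unit_is_nonexpansive} the map $\eta_B\colon(B,d_B)\to(D(B),K(d_B))$ is a $1$-Lipschitz interpretation, and its homomorphic extension to terms sends $s$ to $[s]$ and $t$ to $[t]$. Instantiating the satisfaction of $\phi$ at this particular interpretation therefore yields $K(d_B)([s],[t])\le\epsilon$. Finally, Corollary~\ref{completeness:kantorovich:finite} --- whose entire purpose, via the existence of an optimal coupling (Proposition~\ref{existed_optimal_coupling}), is to produce a derivation that does not use the infinitary rule --- delivers a finite proof of $\mathbb{ICA}\vdash\forall(B,d_B).\,s=_\epsilon t$. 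Combining the two cases gives the claim.

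I do not expect a genuine obstacle in the theorem itself: the substantive work has been front-loaded into Propositions~\ref{soundness:proposition1}, \ref{unit_is_nonexpansive}, \ref{existed_optimal_coupling} and Corollaries~\ref{technical_corollary_proofsystem_extension}, \ref{completeness:kantorovich:finite}, and the theorem is their assembly. The only points requiring care are bookkeeping ones: that every quantitative equation over $\Sigma_\textnormal{CA}$ really is of one of the two handled forms, that the cited results apply verbatim when $(B,d_B)$ is an arbitrary (possibly infinite) fuzzy relation rather than a finite one, and that the homomorphic extension of $\eta_B$ to $\Sigma_\textnormal{CA}$-terms literally computes $[-]$. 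The one place where something could have gone wrong --- the lone appeal to the infinitary rule in the completeness argument --- was already isolated and removed using compactness of the polytope of couplings together with continuity of $V_d$.
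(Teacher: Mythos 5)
Your proposal is correct and follows essentially the same route as the paper: a case split on the shape of $\phi$, with the equality case handled by Corollary~\ref{technical_corollary_proofsystem_extension} (iii)$\Rightarrow$(ii) and the quantitative case handled by extracting $K(d_B)([s],[t])\le\epsilon$ from soundness (your inline use of Propositions~\ref{soundness:proposition1} and~\ref{unit_is_nonexpansive} is exactly the content of Corollary~\ref{soundness:corollary}) and then invoking the finite-proof completeness Corollary~\ref{completeness:kantorovich:finite}. Nothing further is needed.
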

\begin{proof}
Let $\phi$ be a quantitative equation and assume $\mathbb{ICA}\models \phi$ or, equivalently,
that $\mathbb{ICA}\vdash \phi$ is derivable by a possibly infinite proof. 
First, if $\phi$ is of the form $\forall (A,d).s=t$, by Proposition \ref{technical_corollary_proofsystem_extension} (iii)$\Rightarrow$(ii),  
$\mathbb{ICA}\vdash \phi$ has a finite proof. Second, if $\phi$ is of the form $\forall (A,d).s=_\epsilon t$, By Proposition \ref{soundness:proposition1} we have that 
$K(d)([s],[t])\leq \epsilon$ and from Proposition \ref{completeness:kantorovich:finite} we have a finite proof of $\mathbb{ICA}\vdash \phi$.
\end{proof}

\section{Analysis of the Proof and Generalizations}\label{section:generalisation}
In this section we analyze the proof of compactness of $\mathbb{ICA}$ (Theorem \ref{theorem:ica:compact})
to get general insights and generalzations. As a result we obtain a family of compact quantitative theories of convex algebras (Theorem \ref{generalisation:theorem:new} below)
including some interesting examples.

First, by Definition \ref{definition_thery_ICA}, the theory $\mathbb{ICA}$ includes quantitative equations corresponding to the equations of convex algebras
(``Idempotency'', ``Skew-comm./assoc.'' ) and quantitative equations of the form:
\begin{center}
  \begin{tabular}{l l}
    Interpolative: &  $\forall (
      \tiny{
      \begin{blockarray}{c c c c c}
      & x & y & w & z\\
      \begin{block}{c[cccc]}
        x & 1 & 1 & \epsilon & 1 \\
        y & 1 & 1 & 1 & \delta \\
        w & 1 & 1 & 1 & 1 \\
        z & 1 & 1 & 1 & 1 \\
      \end{block}
      \end{blockarray}
      }
      ). \ x +_p y\  =_{\epsilon \oplus_p \delta} \ w+_{p} z$.
  \end{tabular}
  \end{center}
where $\oplus_p:[0,1]^2\rightarrow [0,1]$ is the standard convex algebra operation on $[0,1]$, defined as $x\oplus_p y = px +(1-p)y$.
We highlight with $\oplus_p$ the standard operation because, as we discuss later, it is by varying $\oplus_p$  
that we will obtain other examples of compact theories.

Second, a lifting method (the Kantorovich distance) is found to map a fuzzy relation $(A,d)$  
to a fuzzy relation on $D(A)$, which is the free convex algebra on $A$.
Crucially, the lifting method is expressed as an infimum ($\inf_\gamma V_d(\gamma)$) over evaluations (by the function $V_d:\Gamma(\mu,\nu)\rightarrow [0,1]$) of 
a family of ``witness objects'' (the set $\Gamma(\mu,\nu)$ of couplings $\gamma$).

Third, the soundness of this lifting method is established (Proposition \ref{soundness:proposition1} and Corollary \ref{soundness:corollary}).
The key facts used in the proof of soundness are:
\begin{enumerate}
  \item the witness objects (couplings) carry a convex algebra structure, as in Lemma \ref{lemma:soundness}. 
  \item the evaluation $V_d$ satisfies the property: $V_d(\delta_{\langle a,b \rangle}) \leq d(a,b)$, to validate Proposition \ref{unit_is_nonexpansive}.
  \item the proof of Proposition  \ref{soundness:proposition1} requires that $V_d$ satisfies: $V_d(\gamma +_p \gamma') \leq V_d(\gamma) \oplus_p V_d(\gamma')$. 
  \item the proof of Proposition  \ref{soundness:proposition1} requires that the convex algebra operation $\oplus_p:[0,1]^2\rightarrow [0,1]$ is  monotone (with respect to the standard order on $[0,1]$) in both arguments.
   Monotonicity is required to derive, from the assumptions $V_d(\gamma)\leq \epsilon +\lambda$ and $V_d(\gamma')\leq \delta +\lambda$, that 
   $V_d(\gamma) \oplus_p V_d(\gamma') \leq (\epsilon +\lambda) \oplus_p(\delta +\lambda) $.
   \item the proof of Proposition  \ref{soundness:proposition1} requires that the convex algebra operation $\oplus_p:[0,1]^2\rightarrow [0,1]$ satisfies: $\lim_{\lambda \rightarrow 0}  (\epsilon +\lambda) \oplus_p(\delta +\lambda)  \leq  \epsilon  \oplus_p \delta$. 
   This property is used to derive  $V_d(\gamma) \oplus_p V_d(\gamma')\leq  \epsilon  \oplus_p \delta$.
\end{enumerate}

Fourth, the completeness of the lifting method is established (Proposition \ref{completeness:kantorovich}).
The key facts used in the proof of completeness are:
\begin{enumerate}
\item The proof is based on Lemma \ref{technical_lemma_proofsystem}. This lamma is purely syntactic and 
 it states that, given two convex algebra terms $e(\vec{a})$ and $e(\vec{b})$ of the same shape,  it is possible to iterate the ``Interpolative'' axioms to derive a bound $\epsilon$ on their distances.
Specifically, if $e(\vec{x})=\sum_i p_i x_i$ then $\epsilon = \bigoplus_i p_i d(a_i,b_i)$. 
\item The proof of completeness is based on the observation that a coupling $\gamma\in\Gamma(\mu,\nu)$ provides
a ``receipe'' for rewriting, using the axioms of convex algebras, the two distributions $\mu$ and $\nu$ in the same shape (the convex algebra term $e(\vec{x})$)
which allows for the application of Lemma \ref{technical_lemma_proofsystem}.
\end{enumerate}

Fifth, and last point, the proof of compactness of Theorem \ref{theorem:ica:compact} uses the finitary completeness Corollary \ref{completeness:kantorovich:finite} which is a consequence of the fact that $\inf_\gamma V_d(\gamma)$ achieves a minimum.
This fact is, in turn, proved using the topological compactness of $\Gamma(\mu,\nu)$, from Lemma \ref{Heine-Borel-lemmma},
and topological properties of $V_d$ (continuity) ensuring that a minimum is always achived on a compact domain.

The analysis suggests a generalization of the results of Section \ref{main:section:result} parametric on choices of $\oplus_p$ different than
the standard one. In the rest of this section we develop this idea. The main result, stated as Theorem \ref{generalisation:theorem:new},
gives a family of examples of compact quantitative theories, including the axiomatisation of the $k$-Wasserstein distance, for $k\geq 1$, which
was also studied in \cite{DBLP:conf/lics/MardarePP16}, and the $\infty$-Wasserstein distance.

First, we generalize the notion of Kantorovich distance (Definition \ref{Katorovich-defintion})
and of $\mathbb{ICA}$ (Definition \ref{definition_thery_ICA}) to arbitrary convex algebras  $([0,1], \{\oplus_p\}_{p\in (0,1)})$ on $[0,1]$.

\begin{definition}[Generalized lifting $K^{\oplus_p}(d)$]
  Let $(A,d)$ be a fuzzy relation, i.e., $d:A\times A\rightarrow [0,1]$.
  Denote with $V^{\oplus_p}_{d}$ the unique (recall that $D(A\times A)$ is free on $A\times A$) homomorphic extension of $d$, defined as:
  \begin{center}
  $
  V^{\oplus_p}_{d}\colon D(A\times A)\rightarrow [0,1] \qquad \qquad\qquad 
   \displaystyle \sum^n_{i=i} p_i \delta_{\langle a_i,b_i\rangle} \mapsto \displaystyle \bigoplus^n_{i=1} p_i d(a_i,b_i)\ .
  $
  \end{center} 
  The \emph{$\oplus_p$-lifting of $d$} is the fuzzy relation $K^{\oplus_p}(d)$ on $D(A)$ defined as follows, for all $\mu,\nu\in D(A)$:
  \begin{center}
  $
  K^{\oplus_p}(d)(\mu,\nu) = \displaystyle \inf_{\gamma\in\Gamma(\mu,\nu)} V^{\oplus_p}_{d}(\gamma)\ . 
  $
  \end{center}
\end{definition}
\begin{definition}[Generalized quantitative theory $\mathbb{ICA}^{\oplus_p}$]
The quantitative equational theory  $\mathbb{ICA}^{\oplus_p}$ is defined by the union of the quantitative equations 
  ``Idempotency'', ``Skew-comm'' and ``Skew-assoc'' of Definition \ref{definition_thery_ICA} and by the ``$\oplus_p$-Interpolative''
  quantitative equations, for all $p\in(0,1)$ and $\epsilon,\delta\in[0,1]$:
  \begin{center}
    \begin{tabular}{l l}
      $\oplus_p$--Interpolative: &  $\forall (
        \tiny{
        \begin{blockarray}{c c c c c}
        & x & y & w & z\\
        \begin{block}{c[cccc]}
          x & 1 & 1 & \epsilon & 1 \\
          y & 1 & 1 & 1 & \delta \\
          w & 1 & 1 & 1 & 1 \\
          z & 1 & 1 & 1 & 1 \\
        \end{block}
        \end{blockarray}
        }
        ). \ x +_p y\  =_{\epsilon \oplus_p \delta} \ w+_{p} z$ .
    \end{tabular}
    \end{center}
\end{definition}
Hence, $K(d)$ and $\mathbb{ICA}$ are instances of $K(d)^{\oplus_p}$ and $\mathbb{ICA}^{\oplus_p}$ for the standard 
convex algebra $([0,1], \{\oplus_p\}_{p\in (0,1)})$ on $[0,1]$, where $x\oplus_p y = px + (1-p)y$.
\begin{theorem}\label{generalisation:theorem:new}
  Let  $([0,1], \{\oplus_p\}_{p\in (0,1)})$ be a convex algebra on $[0,1]$ such that, for all $p\in (0,1)$:
\begin{enumerate}
  \item $\oplus_p$ is monotone: $\forall x,x',y\in [0,1]$, if $x\leq x'$ then $x\oplus_p y\leq x'\oplus_p y$ and $y\oplus_p x\leq y\oplus_p x'$,
  \item for all $x,y\in[0,1]$, $\displaystyle \lim_{\lambda\rightarrow 0}\big( (x+\lambda) \oplus_p  (y+\lambda)\big) \leq x\oplus_p y$,
\end{enumerate}
then $\mathbb{ICA}^{\oplus_p}$ is a sound and complete axiomatization of the distance lifting $K^{\oplus_p}$. Furthermore, if:
\begin{enumerate}
  \setcounter{enumi}{2}
  \item for all $d:A\times A\rightarrow [0,1]$ and $\mu,\nu\in D(A)$,  there exists an optimal coupling:
  \begin{center}
$\exists \gamma_\star\in \Gamma(\mu,\nu). \qquad\qquad 
 V^{\oplus_p}_{d}(\gamma_\star) \ = \ 
  \displaystyle \inf_{\gamma\in\Gamma(\mu,\nu)} V^{\oplus_p}_{d}(\gamma)$
  \end{center}
\end{enumerate}
then $\mathbb{ICA}^{\oplus_p}$ is a compact quantitative equational theory.
\end{theorem}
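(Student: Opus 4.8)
The plan is to replay, essentially verbatim, the proof that $\mathbb{ICA}$ is compact (Theorem~\ref{theorem:ica:compact}), abstracting away the concrete features of the standard convex algebra $x\oplus_p y = px+(1-p)y$ into the hypotheses (1)--(3), exactly as flagged in the preceding ``Analysis of the Proof'' discussion. Concretely, the argument splits into three parallel lifts of the earlier results: a soundness half, a completeness half, and a finiteness upgrade.

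First I would establish \emph{soundness}: the quantitative algebra $(D(A),\{+_p\}_{p\in(0,1)}, K^{\oplus_p}(d))$ is a model of $\mathbb{ICA}^{\oplus_p}$. The convex-algebra equations hold because $D(A)$ is literally the free convex algebra on $A$. For the ``$\oplus_p$--Interpolative'' equations, I repeat the proof of Proposition~\ref{soundness:proposition1}: given a $1$-Lipschitz $\iota$, pick couplings $\gamma\in\Gamma(\mu_x,\mu_w)$, $\gamma'\in\Gamma(\mu_y,\mu_z)$ with $V^{\oplus_p}_d(\gamma)\leq\epsilon+\lambda$ and $V^{\oplus_p}_d(\gamma')\leq\delta+\lambda$; use Lemma~\ref{lemma:soundness} to get $\gamma+_p\gamma'\in\Gamma(\mu_x+_p\mu_y,\mu_w+_p\mu_z)$; observe from the definition of $V^{\oplus_p}_d$ as a homomorphism that $V^{\oplus_p}_d(\gamma+_p\gamma') = V^{\oplus_p}_d(\gamma)\oplus_p V^{\oplus_p}_d(\gamma')$ (this replaces the concrete identity $pV_d(\gamma)+(1-p)V_d(\gamma')$); apply hypothesis (1) (monotonicity) to bound this by $(\epsilon+\lambda)\oplus_p(\delta+\lambda)$; then let $\lambda\to0$ and invoke hypothesis (2) to conclude $\leq\epsilon\oplus_p\delta$. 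I also need the analogue of Proposition~\ref{unit_is_nonexpansive}, namely $K^{\oplus_p}(d)(\delta_a,\delta_b)\leq d(a,b)$, which holds because $V^{\oplus_p}_d(\delta_{\langle a,b\rangle}) = d(a,b)$ by the homomorphic definition. Together these give the soundness direction and the analogue of Corollary~\ref{soundness:corollary}.

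Next, \emph{completeness}: $K^{\oplus_p}(d)([s],[t])\leq\epsilon$ implies $\mathbb{ICA}^{\oplus_p}\vdash\forall(A,d).\,s=_\epsilon t$. Here the only arithmetic in the original proof that actually used the shape of $\oplus_p$ was Lemma~\ref{technical_lemma_proofsystem}, whose induction on $n$ derives $\mathbb{ICA}\vdash\forall(A,d).\,e(\vec a)=_\epsilon e(\vec b)$ with $\epsilon = \sum_i p_i d(a_i,b_i)$; the same induction, using the ``$\oplus_p$--Interpolative'' axiom in place of the standard ``Interpolative'' one, yields the $\oplus_p$-version with $\epsilon = \bigoplus_i p_i d(a_i,b_i)$ (the value matching the definition of $V^{\oplus_p}_d$ precisely because both are the homomorphic extension of $d$). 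The rest of the proof of Proposition~\ref{completeness:kantorovich} is purely convex-algebraic bookkeeping (a coupling $\gamma$ rewrites $\mu$ and $\nu$ into a common shape $e(\vec x)$) together with Corollary~\ref{technical_corollary_proofsystem_extension}, none of which mentions $\oplus_p$, so it carries over unchanged; this proves the first claim of the theorem (sound and complete axiomatization). Finally, for compactness, hypothesis (3) supplies an optimal coupling $\gamma_\star$ with $V^{\oplus_p}_d(\gamma_\star)=K^{\oplus_p}(d)(\mu,\nu)$, so Step~(ii) of the completeness proof (the single application of the infinitary rule collecting all couplings) can be skipped: one takes the finite derivation from Step~(i) for $\gamma=\gamma_\star$ and applies one weakening rule. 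This gives the finitary-completeness analogue of Corollary~\ref{completeness:kantorovich:finite}, and then the proof of Theorem~\ref{theorem:ica:compact} applies verbatim: for $\phi$ of the form $\forall(A,d).\,s=t$ use Corollary~\ref{technical_corollary_proofsystem_extension}, and for $\phi$ of the form $\forall(A,d).\,s=_\epsilon t$ use soundness plus finitary completeness.

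The main obstacle is not conceptual but bookkeeping: one must verify carefully that every place where the original Section~\ref{main:section:result} used the concrete identity $px+(1-p)y$ has been correctly isolated, and in particular that $V^{\oplus_p}_d$ really is well-defined and homomorphic (this is where freeness of $D(A\times A)$ on $A\times A$ is invoked) and that the equality $V^{\oplus_p}_d(\gamma+_p\gamma') = V^{\oplus_p}_d(\gamma)\oplus_p V^{\oplus_p}_d(\gamma')$ is simply homomorphicity rather than anything requiring hypotheses (1)--(2). Hypotheses (1) and (2) are then used only in the two inequalities of the soundness step, and hypothesis (3) only in the finiteness step; no additional assumptions (such as $\oplus_p$ being continuous everywhere, or a metric) are needed, which is exactly the point of the generalization. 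I would therefore organize the write-up as: (a) $V^{\oplus_p}_d$ well-defined and homomorphic; (b) generalized soundness (Proposition~\ref{soundness:proposition1} analogue) using (1),(2); (c) generalized Lemma~\ref{technical_lemma_proofsystem}; (d) generalized completeness; (e) optimal-coupling corollary from (3); (f) conclude compactness as in Theorem~\ref{theorem:ica:compact}.
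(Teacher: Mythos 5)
Your proposal is correct and follows essentially the same route as the paper's own (much terser) proof: both reduce the theorem to replaying the soundness argument of Proposition~\ref{soundness:proposition1} using homomorphicity of $V^{\oplus_p}_d$ together with hypotheses (i)--(ii), the completeness argument via the $\oplus_p$-analogue of Lemma~\ref{technical_lemma_proofsystem}, and the finiteness upgrade via the optimal coupling of hypothesis (iii). Your write-up is in fact more detailed than the paper's sketch, and correctly isolates exactly where each hypothesis is used.
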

\begin{proof}
By definition of $V^{\oplus_p}_{d}$ as a homomorphism of convex algebras,  it holds that $V^{\oplus_p}_{d}(\delta_{\langle a,b\rangle}) = d(a,b)$ and
$V^{\oplus_p}_{d}(\gamma+_q \gamma') =  V^{\oplus_p}_{d}(\gamma)\oplus_q V^{\oplus_p}_{d}(\gamma')$. This, together with the assumptions (i) and (ii),
ensures that the proof of Soundness can be carried out. Similarly, using the definition $\mathbb{ICA}^{\oplus_p}$,
the analogous of Lemma \ref{technical_lemma_proofsystem} can be proved and with it the proof of completeness. 
Finally, the existence of optimal couplings allows for the strenthening of the completeness theorem to finite proofs, ensuring compactness. 
\end{proof}
For a given convex algebra $([0,1], \{\oplus_p\}_{p\in (0,1)})$ on $[0,1]$, conditions (i) and (ii) are usually straightforwards to verify. For (iii), it is typically necessary to establish some topological 
property (like continuity or, more generally, lower semicontinuity \cite[\S 6]{bourbaki_book}) of the evaluation map $V^{\oplus_p}_{d}:D(A\times A)\rightarrow [0,1]$ that guarantees that the infimum over the compact set $\Gamma(\mu,\nu)$ is always achieved (see \cite[Thm 4, p. 362]{bourbaki_book}).

\begin{proposition}[Examples]\label{proposition:examples:final}
  The following operations $\oplus_p:[0,1]^2\rightarrow [0,1]$, for $p\in (0,1)$, satisfy the axioms of convex algebras (Definition \ref{convex-algebras-definition}) and conditions (i--iii) of Theorem \ref{generalisation:theorem:new}:
 \begin{center}
\begin{tabular}{l l l }
binary version & $n$-ary version \\
\hline
\hline
$x\oplus_p y = px +(1-p)y$ & $\bigoplus_i p_i x_i = \sum_i p_ix_i$ & the standard operation\\
\hline
$x\oplus_p y = \max\{ x,y\}$ & $\bigoplus_i p_i x_i = \max\{ x_i \}_i$ & $\vee$-semilattice\\
\hline
$x\oplus_p y = (px^k + (1-p)y^k)^{\frac{1}{k}}$ & $\bigoplus_i p_i x_i = (\sum_i p_i x_i^k)^{\frac{1}{k}}$ & for some $k\geq 1$\\
\hline
$x\oplus_p y = x^p\cdot y^{1-p}$ &  $\bigoplus_i p_i x_i = \prod_i {x_i}^{p_i}$ & log-probabilities\\
\hline
\end{tabular}
 \end{center}
\end{proposition}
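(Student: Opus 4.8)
The statement asks us to verify, for each of the four operations $\oplus_p$ in the table, that (a) it satisfies the convex algebra axioms of Definition~\ref{convex-algebras-definition}, and (b) it satisfies conditions (i)--(iii) of Theorem~\ref{generalisation:theorem:new}. My plan is to organize the proof as a case analysis over the four rows, treating the convex-algebra axioms and conditions (i)--(ii) quickly for all cases, and reserving the real work for condition (iii) --- the existence of optimal couplings. For the convex-algebra axioms I would note that idempotency ($x \oplus_p x = x$), skew-commutativity and skew-associativity are routine arithmetic identities for each formula: they hold for $px + (1-p)y$ by linear algebra; for $\max$ they reduce to associativity/commutativity of $\vee$ (and the skew-associativity identity collapses since both sides equal $\max\{x,y,z\}$ regardless of the weights); for $(px^k + (1-p)y^k)^{1/k}$ they follow by applying $t \mapsto t^k$, using the linear case, and applying $t\mapsto t^{1/k}$; and for $x^p y^{1-p}$ by taking logarithms and reducing to the linear case (with the convention $0^0 = 1$, so that $0$ is absorbed correctly). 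Monotonicity, condition~(i), is immediate for all four (each formula is monotone in $x$ and $y$ pointwise).

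\medskip

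\noindent\textbf{Condition (ii).} For the linear, power-mean, and geometric cases, $\oplus_p$ is in fact \emph{continuous} on $[0,1]^2$, so $\lim_{\lambda\to 0}\big((x+\lambda)\oplus_p(y+\lambda)\big) = x\oplus_p y$ and (ii) holds with equality; I would just remark this. For the semilattice case, $\max\{x+\lambda, y+\lambda\} = \max\{x,y\} + \lambda \to \max\{x,y\}$, so again (ii) holds. (A small technical point: since we work in $[0,1]$, the limit should be read as $\lambda \to 0^+$ with $x+\lambda, y+\lambda$ possibly exceeding $1$; but since the resulting $\oplus_p$ values are still computed by the same increasing formula and we only need an inequality in the limit, this causes no difficulty --- one can equivalently clip to $[0,1]$ and use monotonicity.)

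\medskip

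\noindent\textbf{Condition (iii) --- the main obstacle.} This is the substantive part. We must show that for each $\oplus_p$ in the table, every pair $\mu,\nu \in D(A)$ admits a coupling $\gamma_\star$ minimizing $V^{\oplus_p}_d$. By Lemma~\ref{Heine-Borel-lemmma}, $\Gamma(\mu,\nu)$ is a nonempty compact subset of $[0,1]^{n\times n}$, so by the Weierstrass/Bourbaki extremum theorem it suffices to show that $V^{\oplus_p}_d\colon \Gamma(\mu,\nu) \to [0,1]$ is lower semicontinuous (continuity would suffice but is cleaner to state as l.s.c., matching the remark after Theorem~\ref{generalisation:theorem:new}). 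The key observation is that $V^{\oplus_p}_d(\gamma) = \bigoplus_{a,b} \gamma(a,b)\, d(a,b)$, viewed as a function of the finitely many coordinates $\gamma(a,b)$ with fixed coefficients $d(a,b)$, is a \emph{continuous} function of $\gamma$ for all four operations: it is a polynomial (linear) in the first case; a composition of $\max$ with scalar multiplications in the second; a composition of continuous maps $(t \mapsto t^k$, weighted sum, $t\mapsto t^{1/k})$ in the third; and a product of continuous power functions $\gamma(a,b)^{\,?}$... here one must be slightly careful, because in the geometric case the $n$-ary form is $\prod_i x_i^{p_i}$ where the exponents $p_i = \gamma(a_i,b_i)$ are themselves the variables, so $V^{\oplus_p}_d(\gamma) = \prod_{a,b} d(a,b)^{\gamma(a,b)}$, which is continuous in $\gamma$ on the compact domain (using $t^s$ jointly continuous for $t \in [0,1], s \in [0,1]$ with $0^0 := 1$, and noting $\sum \gamma(a,b) = 1$). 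Hence in every case $V^{\oplus_p}_d$ is continuous, a fortiori lower semicontinuous, on the compact set $\Gamma(\mu,\nu)$, and the minimum is attained. I would present this uniformly: state that in each case $V^{\oplus_p}_d$ is obtained by composing continuous operations with the coordinate projections $\gamma \mapsto \gamma(a,b)$, hence is continuous, and invoke \cite[Thm 4, p. 362]{bourbaki_book}.

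\medskip

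\noindent I expect the geometric/log-probability row to be the only genuinely delicate point, precisely because the coupling weights appear as \emph{exponents} rather than as linear coefficients, so one has to phrase $V^{\oplus_p}_d$ carefully (as $\prod_{a,b} d(a,b)^{\gamma(a,b)}$) and invoke joint continuity of $(t,s)\mapsto t^s$ with the boundary convention at $t=0$. Everything else is routine verification, and I would keep it terse.
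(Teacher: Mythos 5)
Your verification of the convex-algebra axioms and of conditions (i)--(ii) is fine, but there is a genuine error in your treatment of condition (iii) for the $\vee$-semilattice row, and it is precisely the point the paper flags. You read the evaluation map in that case as ``a composition of $\max$ with scalar multiplications'', i.e.\ essentially as $\gamma\mapsto\max_{a,b}\gamma(a,b)\,d(a,b)$, and conclude continuity. But the $n$-ary operation in the table is $\bigoplus_i p_i x_i=\max\{x_i\}_i$: the weights do not multiply the values, they only determine --- via the convention that zero-weight summands are dropped --- which values enter the maximum. Hence $V^{\oplus_p}_d(\gamma)=\max\{d(a,b)\mid \gamma(a,b)>0\}$, the $\infty$-Wasserstein evaluation, which is \emph{not} continuous in $\gamma$: when the mass $\gamma(a,b)$ at a pair with large $d(a,b)$ tends to $0$, the value drops discontinuously at the limit. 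The paper says this explicitly (``in this case, the evaluation map $V^{\oplus_p}_{d}$ is not continuous'') and instead argues that it is lower semicontinuous, being the pointwise supremum of the continuous $k$-Wasserstein evaluations for $k\geq 1$; lower semicontinuity on the compact set $\Gamma(\mu,\nu)$ still yields an optimal coupling by \cite[Thm 4, p. 362]{bourbaki_book}. As written, your argument does not establish (iii) for this row.

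A smaller defect of the same kind occurs in the log-probability row, which you rightly single out as delicate but then resolve with a false claim: $(t,s)\mapsto t^s$ with $0^0:=1$ is not jointly continuous at $t=0$ (already $s\mapsto 0^s$ jumps from $1$ to $0$ as $s$ leaves $0$), so $V^{\oplus_p}_d(\gamma)=\prod_{a,b}d(a,b)^{\gamma(a,b)}$ is discontinuous whenever some $d(a,b)=0$. The conclusion survives via a short case split: if $d(a,b)=0$ for some $(a,b)\in supp(\mu)\times supp(\nu)$, the independent coupling $\mu\times\nu$ already attains the minimum value $0$; otherwise all relevant $d(a,b)$ are positive, $V^{\oplus_p}_d$ is continuous on $\Gamma(\mu,\nu)$, and the Weierstrass argument applies. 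For comparison, the paper gives no detailed proof of this proposition; its only substantive remark is exactly the lower-semicontinuity argument for the $\max$ case that your proposal is missing.
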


Hence, by Theorem \ref{generalisation:theorem:new}, the quantitative theory $\mathbb{ICA}^{\oplus_p}$ is compact for all these examples. As already noted, the Kantorovich distance $K(d)$ (Definition \ref{Katorovich-defintion}) is the special case of $K^{\oplus_p}(d)$ for $x\oplus_p y = px + (1-p)y$, the standard convex algebra operation on $[0,1]$.
The case $x \oplus_p y = (px^k + (1-p)y^k)^{\frac{1}{k}}$, for some $k\geq 1$, gives as $K^{\oplus_p}(d)$ the Wasserstein $k$-distance (see \cite[Ch. 6]{villani2008}).
The case $x \oplus_p y= \max\{x,y\}$ gives as $K^{\oplus_p}(d)$ the Wasserstein $\infty$-distance (see \cite{infwasserstain}). Note that, in this case, the evaluation map $V^{\oplus_p}_{d}$ is not continuous
but it is lower semicontinuous by \cite[Thm. 4, p. 362]{bourbaki_book}, since it is the supremum of all $k$-Wasserstein evaluations, for $k\geq 1$, which are continuous.
Finally, the case $x \oplus_p y= x^py^{1-p}$ is a distance related to log-probabilities, since $x^py^{1-p} = \exp(p\log(x) + (1-p)\log(y))$, using the convention $\log(0)=-\infty$ and $\exp(-\infty)=0$.

\section{Conclusions and Future Work}
We have introduced the concept of \emph{compact quantitative equational theory} in the quantitative algebra apparatus of \cite{MioEA24} and proved that the theory $\mathbb{ICA}$ of \emph{interpolative convex algebras} is compact. This theory was first studied in \cite{DBLP:conf/lics/MardarePP16},
under the name \emph{interpolative barycentric algebras}, in the less general context of metric spaces and wihtout explictly observing the compactness of the theory. 
In Section \ref{section:generalisation} we have obtained an algebraic generalization: to each convex algebra structure on $[0,1]$,
satisfying certain regularity conditions, corresponds 
a compact variant of $\mathbb{ICA}$ axiomatizing a variant of the Kantorovich lifting on probability distributions. 
We have discussed how some of these variants coincide with well-known distance liftings, such as $k$-Wasserstein (also studied in \cite{DBLP:conf/lics/MardarePP16}), $\infty$-Wasserstein and involving $\log$-probabilities. 

Methods similar to those adopted in this paper can be used to prove the compactness of the quantitative theories of semilattices \cite{DBLP:conf/lics/MardarePP16} and of convex  semilattices \cite{DBLP:conf/concur/MioV20}. 
Both these theories, just like $\mathbb{ICA}$, turn out to be manageble due to topological compactness reasons. For example, finitely generated convex sets of finitely supported probability distributions are compact.
An interesting direction of future research is to explore the connection between compactness of quantitative theories and topological compactness.

Our definitions and results have been formulated in the context of the quantitative algebra apparatus of \cite{MioEA24} based on fuzzy relations, but
they can be adapted to the same apparatus where \emph{generalized metric spaces} (such as metric spaces) take the place of fuzzy relations, as described in \cite[\S 8]{MioEA24},
or in the quantitative algebra apparatus of \cite{FordEA21} based on arbitrary relational Horn theories.


\bibliographystyle{entics}
\bibliography{biblio}

\end{document}